\newcommand{\sol}[2]{{\mathcal S}(\seq{#1}{#2})}
\newcommand{\Sol}[2]{{\mathcal S}\bigl(\seq{#1}{#2}\bigr)} 
\newcommand{\vect}[1]{\overrightarrow{#1}} 
\newcommand{\impl}{\supset}
\newcommand{\cosign}{\textit{co}}
\newcommand{\cool}{\overline{\lambda}^\cosign}
\newcommand{\fix}{{\nu}}  
\newcommand{\gfp}{\mathsf{gfp}}  
\newcommand{\oo}{\mathbb{O}}
\newcommand{\lb}{\lambda}
\newcommand{\LAx}{\textit{LAx}}
\newcommand{\RIntro}{\textit{RIntro}}
\newcommand{\LIntro}{\textit{LIntro}}
\newcommand{\LVecIntro}{\textit{LVecIntro}}
\newcommand{\Alts}{\textit{Alts}}
\newcommand{\App}{\textit{App}}
\newcommand{\ol}{\overline{\lambda}}
\newcommand{\olH}{\overline{\lambda}_{\sf Horn}}
\newcommand{\coolfs}{\overline{\lambda}^\cosign_{\Sigma}}
\newcommand{\coolfsE}{E\overline{\lambda}^\cosign_{\Sigma}}
\newcommand{\olfsfix}{\overline{\lambda}^\gfp_{\Sigma}}
\newcommand{\seqt}[3]{#1\vdash #2:#3}
\newcommand{\seql}[4]{#1|#2:#3\vdash #4}
\newcommand{\seq}[2]{#1\Rightarrow #2}
\newcommand{\tuple}[1]{\langle #1 \rangle}
\newcommand{\der}[2]{#1#2}
\newcommand{\fl}[2]{\langle #2\rangle_{#1}}
\newcommand{\bs}[2]{#1+#2} 
\newcommand{\ns}[2]{#1+\cdots+#2} 
\newcommand{\s}[2]{\sum\limits_{#1}^{}{#2}} 
\newcommand{\fs}[2]{\s{#1}{#2}}
\newcommand{\interp}[2]{[\![#1]\!]_{#2}}
\newcommand{\interpwe}[1]{[\![#1]\!]}
\newcommand{\N}[2]{N_{#1}(#2)}
\newcommand{\colbase}{{\sf mem}}
\newcommand{\colr}[2]{\colbase(#1,#2)}
\newcommand{\colebase}{\colbase_E}
\newcommand{\colra}[2]{\colebase(#1,#2)}
\newtheorem{lemma}{Lemma}  
\newtheorem{theorem}[lemma]{Theorem}  
\newtheorem{example}[lemma]{Example} 
\newtheorem{definition}[lemma]{Definition} 
\newtheorem{proposition}[lemma]{Proposition}
\newtheorem{corollary}[lemma]{Corollary}
\title{A Coinductive Approach to Proof Search}
\author{Jos\'{e} Esp\'{\i}rito Santo
\institute{Centro de Matem\'{a}tica}
\institute{Universidade do Minho}
\institute{Portugal}
\and
Ralph Matthes
\institute{Institut de Recherche en Informatique de Toulouse (IRIT)}
\institute{C.N.R.S. and University of Toulouse}
\institute{France}
\and
Lu\'{\i}s Pinto
\institute{Centro de Matem\'{a}tica}
\institute{Universidade do Minho}
\institute{Portugal}
}
\begin{document}

\maketitle
\begin{abstract}

  We propose to study proof search from a coinductive point of view. In this paper, we consider intuitionistic logic and a focused system
  based on Herbelin's LJT for the implicational fragment. We introduce a variant of lambda calculus with potentially
  infinitely deep terms and a means of expressing alternatives for the
  description of the ``solution spaces'' (called B\"ohm forests),
  which are a representation of all (not necessarily well-founded but
  still locally well-formed) proofs of a given formula (more
  generally: of a given sequent).

As main result we obtain, for each given formula, the reduction of a coinductive definition of the solution space to a effective coinductive description 
in a finitary term calculus with a formal greatest fixed-point operator. This reduction works in a quite direct manner for the case of Horn formulas. For  the general case, the naive extension would not even be true. We  need to study ``co-contraction'' of contexts (contraction bottom-up) for dealing with the varying contexts needed beyond the Horn fragment, and we point out the appropriate finitary calculus, where fixed-point variables are typed with sequents. Co-contraction enters the interpretation of the formal greatest fixed points - curiously in the semantic interpretation of fixed-point variables and not of the fixed-point operator.

\end{abstract}

\section{Introduction}\label{sec:intro}

Proof theory starts with the observation that a proof is more than
just the truth value of a theorem. A valid theorem can have many
proofs, and several of them can be interesting. In this paper, we
somehow extend this to the limit and study all proofs of a given
proposition. Of course, who studies proofs can also study any of them
(or count them, if there are only finitely many possible proofs, or
try to enumerate them in the countable case). But we do this study somehow
simultaneously: we introduce a language to express the full ``solution
space'' of proof search. And since we focus on the generative aspects
of proof search, it would seem awkward to filter out failed proof
attempts from the outset. This does not mean that we pursue impossible
paths in the proof search (which would hardly make sense) but that we
allow to follow infinite paths. An infinite path does not correspond
to a successful proof, but it is a structure of locally correct proof
steps. In other words, we use coinductive syntax to model \emph{all}
locally correct proof figures. This gives rise to a not necessarily
wellfounded search tree. However, to keep the technical effort
simpler, we have chosen a logic where this tree is finitely branching,
namely the implicational fragment of intuitionistic propositional
logic (with proof system given by the cut-free fragment of the system
$\ol$ by Herbelin \cite{HerbelinCSL94}).

Lambda terms or variants of them (expressions that may have bound
variables) are a natural means to express proofs (an observation that is called \emph{the}
Curry-Howard isomorphism) in implicational logic. Proof alternatives
(locally, there are only finitely many of them since our logic has no
quantifier that ranges over infinitely many individuals) can be
formally represented by a finite sum of such solution space
expressions, and it is natural to consider those sums up to
equivalence of the \emph{set} of the alternatives. Since infinite
lambda-terms are involved and since whole solution spaces are being
modeled, we call these coinductive terms \emph{B\"ohm forests}.

By their coinductive nature, B\"ohm forests are no proper syntactic
objects: they can be defined by all mathematical (meta-theoretic)
means and are thus not ``concrete'', as would be expected from
syntactic elements. This freedom of definition will be demonstrated
and exploited in the canonical definition (Definition~\ref{def:sol})
of B\"ohm forests as solutions to the task of proving a sequent (a
formula $A$ in a given context $\Gamma$). In a certain sense, nothing
is gained by this representation: although one can calculate on a
case-by-case basis the B\"ohm forest for a formula of interest and see
that it is described as fixed point of a system of equations
(involving auxiliary B\"ohm forests as solutions for the other
meta-variables that appear in those equations), an arbitrary B\"ohm
forest can only be observed to any finite depth, without ever knowing
whether it is the expansion of a regular cyclic graph structure (the latter being a finite structure).

Our main result is that the B\"ohm forests that appear as solution
spaces of sequents have such a finitary nature: more precisely, they
can be interpreted as semantics of a finite term in a variant of
lambda calculus with alternatives and formal greatest
fixed-points. For the Horn fragment (where nesting of implications to
the left is disallowed), this works very smoothly without surprises
(Theorem~\ref{thm:Horn}). The full implicational case, however, needs
some subtleties concerning the fixed-point variables over which the
greatest fixed points are formed and about capturing redundancy that
comes from the introduction of several hypotheses that suppose the
same formula. The interpretation of the finite expressions in terms of
B\"ohm forests needs a special operation that we call
\emph{co-contraction} (contraction bottom-up). However, this operation is already definable in terms
of B\"ohm forests. Without this operation, certain repetitive patterns
in the solution spaces due to the presence of negative occurrences of
implications could not be identified. With it, we obtain the finitary
representation (Theorem~\ref{thm:FullProp}).

\medskip In the next section, we quickly recapitulate syntax and
typing rules of the cut-free fragment of system $\ol$ and also
carefully describe its restriction to Horn formulas.

Section~\ref{sec:coindrepr} has the definition of the not necessarily
well-founded proofs, corresponding to a coinductive reading of $\ol$
(including its typing system). This is system $\cool$. Elimination
alternatives are then added to this system (yielding the B\"ohm
forests), which directly allow the definition of the solution spaces
for the proof search for sequents. We give several examples and then
show that the defined solution spaces adequately represent all the
$\cool$ proofs of a sequent.

In Section~\ref{sec:finitaryrepr}, we present first the finitary
system to capture the Horn fragment and then modify it to get the main
result for full implicational logic.

The paper closes with discussions on related and future work in Section~\ref{sec:conclusion}.

\section{Background}\label{sec:background}

We recall below the cut-free fragment of system $\ol$
(a.k.a. LJT), a sequent calculus for intuitionistic implication
by Herbelin \cite{HerbelinCSL94}.

Letters $p,q,r$ are used to range over a base set of propositional
variables (which we also call \emph{atoms}).  Letters $A,B,C$ are used
to range over the set of formulas (= types) built from propositional
variables using the implication connective (that we write $A\impl B$)
that is parenthesized to the right. Often we will use the fact that
any implicational formula can be uniquely decomposed as $A_1\impl A_2\impl
\ldots\impl A_n\impl p$ with $n\geq0$, also written in vectorial
notation as $\vec{A}\impl p$. For example, if the vector $\vec{A}$ is
empty the notation means simply $p$, and if $\vec{A}=A_1,A_2$, the
notation means $A_1\impl(A_2\impl p)$.

The cut-free expressions of $\ol$ are separated into terms and lists,
and are given by:
$$
\begin{array}{lcrcl}
\textrm{(terms)} &  & t,u & ::= & \der xl\,|\,\lambda x^A.t\\
\textrm{(lists)} &  & l & ::= & \tuple{}\,|\,u::l\\
\end{array}
$$
where a countably infinite set of variables ranged over by letters
$x$, $y$, $w$, $z$ is assumed. Note that in lambda-abstractions we
adopt a \emph{domain-full} presentation, annotating the bound variable
with a formula.  The term
constructor $\der{x}{l}$ is usually called \emph{application}. Usually
in the meta-level we prefer to write $x \tuple{t_1,\ldots,t_n}$ (with
$n\in\mathbb{N}_0$) to range over application constructions, and avoid
speaking about lists explicitly (where obviously, the notation
$\tuple{t_1,\ldots,t_n}$ means $\tuple{}$ if $n=0$ and $t_1::l$, if
$\tuple{t_2,\ldots,t_n}$ means $l$). In the meta-level, when we know $n=0$, instead of $x \tuple{t_1,\ldots,t_n}$, we simply write the variable $x$. 

We will view contexts $\Gamma$
as finite lists of
declarations $x:A$, where no variable $x$ occurs twice. The
context $\Gamma,x:A$ is obtained from $\Gamma$ by adding the
declaration $x:A$, and will only be written if this yields again a
valid context, i.\,e., if $x$ is not declared in $\Gamma$.
The system has a form of sequent for each class of expressions:
$$\Gamma\vdash t:A\qquad \quad \Gamma|l:A\vdash p.$$
Note the restriction to \emph{atomic sequents} (the RHS formula is an atom) in the case of list sequents.

The rules of $\ol$ for deriving sequents are in Figure~\ref{fig:lambda-bar}.
\begin{figure}[tb]\caption{Typing rules of $\ol$}\label{fig:lambda-bar}
$$
\begin{array}{c}
\infer[\LAx]{\seql{\Gamma}{\tuple{}}{p}{p}}{}\quad\quad\infer[\LIntro]{\seql{\Gamma}{u::l}{A\impl B}{ p}}{\Gamma\vdash u:A&
\seql{\Gamma}{l}{B}{p}}\\\\
\infer[\RIntro]{\Gamma\vdash\lambda x^A.t:A\impl B}{\Gamma,x:A\vdash
t:B}\quad\quad
\infer[\App]
 {\seqt{\Gamma}{\der{y}{l}}{p}}
 {\seql{\Gamma}{l}{A}{p}\quad\quad (y:A)\in\Gamma}
\end{array}
$$
\end{figure}
Note that, as list sequents are atomic, the conclusion of the
application rule is also atomic. This is not the case in Herbelin's
original system \cite{HerbelinCSL94}, where list sequents can have a
non-atomic formula on the RHS. In the variant of cut-free $\ol$ we
adopted, the only rule available for deriving a term sequent whose RHS
is an implication is $\RIntro$. Still, our atomic restriction will not
cause loss of completeness of the system for intuitionistic
implication. This restriction is typically adopted in systems tailored
for proof search, as for example systems of focused proofs. In fact,
$\ol$ corresponds to a focused backward chaining system where all
atoms are \emph{asynchronous} (see e.\,g. Liang and Miller
\cite{LiangMillerTCS09}).

We will need the following properties of $\ol$.

\begin{lemma}[Type uniqueness]\label{lem:type-unique-lambda-bar}
  \begin{enumerate}
  \item Given $\Gamma$ and $t$, there is at most one $A$ such that $\seqt\Gamma tA$.
  \item Given  $\Gamma$, $l$ and $A$, there is at most one $p$ such that $\seql\Gamma lAp$.
  \end{enumerate}
\end{lemma}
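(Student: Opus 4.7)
The plan is to prove the two statements simultaneously, by mutual induction on the structure of the term $t$ and the list $l$. The key observation is that every syntactic form admits exactly one typing rule as its last step, so the proof reduces to tracking which data are forced by the premises of that rule.

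For part (1), I would split on $t$. If $t=\lambda x^B.t'$, the only applicable rule is $\RIntro$, so any derivation of $\seqt{\Gamma}{t}{A}$ must have $A=B\impl C$ with $\seqt{\Gamma,x:B}{t'}{C}$; the induction hypothesis on $t'$ gives uniqueness of $C$, hence of $A$. If $t=\der{y}{l}$, the only applicable rule is $\App$, and any derivation of $\seqt{\Gamma}{\der{y}{l}}{A}$ forces $A$ to be atomic, say $A=p$, with $(y:B)\in\Gamma$ and $\seql{\Gamma}{l}{B}{p}$. Here the fact that contexts contain no variable twice is essential: it pins down $B$ uniquely from $y$ and $\Gamma$. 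Then the induction hypothesis of part (2) applied to $l$ and $B$ yields uniqueness of $p$, hence of $A$.

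For part (2), I would split on $l$. If $l=\tuple{}$, the only applicable rule is $\LAx$, which forces $A$ to be an atom equal to the RHS atom, so given $A$ the atom $p$ is uniquely determined (it must coincide with $A$). If $l=u::l'$, the only applicable rule is $\LIntro$, requiring $A=B\impl C$, $\seqt{\Gamma}{u}{B}$ and $\seql{\Gamma}{l'}{C}{p}$. Since $A$ is given, its unique decomposition determines $B$ and $C$, and then the induction hypothesis on $l'$ and $C$ delivers uniqueness of $p$.

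There is no real obstacle here: the argument is purely a matter of observing that the syntax of $t$ or $l$ determines the last rule of the derivation, and that the premises of that rule recover all missing data uniquely (either by the induction hypothesis, by the given parameters, or by the no-duplicate-variables discipline in contexts). The only point worth being explicit about is that the induction must be carried out simultaneously over terms and lists, since the $\App$ case in part (1) appeals to part (2) and vice versa through $\LIntro$.
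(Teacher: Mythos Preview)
Your proposal is correct and matches the paper's own proof, which is stated simply as a simultaneous induction on derivability. Because the typing rules are syntax-directed, your structural induction on $t$ and $l$ unfolds into exactly the same case analysis; your write-up just spells out the details the paper omits.
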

\begin{proof}
  Simultaneous induction on derivability.
\end{proof}
Since the empty list $\tuple{}$ has no type index, we need to know $A$ in the second statement of the previous lemma.

\begin{lemma}[Inversion of typing]\label{lem:gen-lambda-bar} In $\ol$:
\begin{enumerate}
\item $\seqt\Gamma{\lambda x^A.t}B$ iff there exists $C$ s.t. $B=A\impl C$ and $\seqt{\Gamma, x:A}tC$;
\item $\seqt{\Gamma}{x \tuple{t_1,\ldots,t_k}}A$ iff $A=p$ and there exists $\vec B$ s.t. $x:\vec B\impl p\in\Gamma$ and  $\seqt{\Gamma}{t_i}{B_i}$, for any $i$.
\end{enumerate}
\end{lemma}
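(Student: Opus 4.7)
The plan is to follow the standard generation pattern for both parts: for each judgement, inspect which rule of Figure~\ref{fig:lambda-bar} can possibly produce a term sequent with the given subject, then read off the forced shape of the conclusion and the premises. Both forward directions will be essentially immediate because in each case only one rule is applicable at the root of the derivation, and both converse directions are single rule applications.

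For part~1, only $\RIntro$ has a term sequent with a $\lambda$-abstraction as subject, since the subject of $\App$ is of the form $\der y l$. So any derivation of $\seqt\Gamma{\lambda x^A.t}B$ must end with $\RIntro$, which forces $B = A\impl C$ for some $C$ and gives the premise $\seqt{\Gamma,x:A}tC$; conversely, one application of $\RIntro$ closes the proof.

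For part~2, dually, only $\App$ can derive a term sequent whose subject has the shape $\der x l$, so $A$ must be an atom $p$ and the derivation yields $(x:A')\in\Gamma$ together with $\seql\Gamma{\tuple{t_1,\ldots,t_k}}{A'}p$ for some $A'$. It then remains to invert this list sequent, which I would do via an auxiliary claim proved by induction on $k$ (equivalently, on the list derivation): $\seql\Gamma{\tuple{t_1,\ldots,t_k}}{A'}p$ holds iff $A' = \vec B\impl p$ for some $\vec B$ of length $k$ with $\seqt\Gamma{t_i}{B_i}$ for each $i$. The base case $k=0$ uses $\LAx$, forcing $A' = p$ and $\vec B$ empty; the step case uses $\LIntro$, giving $A' = B_1\impl A''$ with $\seqt\Gamma{t_1}{B_1}$ and a shorter list sequent $\seql\Gamma{\tuple{t_2,\ldots,t_k}}{A''}p$ to which the induction hypothesis applies. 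The converse direction builds the derivation by a single $\App$ above a chain of $k$ instances of $\LIntro$ terminating in $\LAx$.

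The only mildly subtle point is that the meta-level notation $x\tuple{t_1,\ldots,t_k}$ unfolds to an $\App$ above a list constructed by $k$ uses of $::$ ending in $\tuple{}$, so the inversion has to recurse through this list structure. Here the atomicity restriction on list sequents in the present variant of $\ol$ plays its role: it guarantees that the chain of $\LIntro$ steps bottoms out in $\LAx$ on exactly the atom $p$ we began with, so the extracted $\vec B$ and $p$ genuinely assemble into $\vec B\impl p$ matching the declaration $x:\vec B\impl p\in\Gamma$. I do not anticipate any real obstacle; this is the standard rule-directed generation argument, and the induction on list length is the only step that is not entirely immediate.
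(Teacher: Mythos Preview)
Your proposal is correct and follows essentially the same approach as the paper: part~1 is handled immediately by inspection of the only applicable rule, and part~2 is reduced to an auxiliary inversion fact for list sequents, namely that $\seql{\Gamma}{\tuple{t_1,\ldots,t_k}}{A'}{p}$ iff $A'=B_1\impl\cdots\impl B_k\impl p$ with $\seqt{\Gamma}{t_i}{B_i}$ for each $i$, proved by induction on $k$. The paper's proof is terser but structurally identical.
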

\begin{proof}
1. is immediate and 2. follows with the help of the fact that:  $\seql{\Gamma}{\tuple{t_1,\ldots,t_k}}Bp$  iff there exist $B_1,...,B_k$ s.t. $B=B_1\impl...\impl B_k\impl p$ and, for any $i$, $\seqt{\Gamma}{t_i}{B_i}$  (proved by induction on $k$).
\end{proof}

Now we identify the \emph{Horn fragment} of cut-free $\ol$, that we
denote by $\olH$. The class of \emph{Horn formulas} (also called
\emph{Horn clauses}) is given by the grammar:
$$
\begin{array}{lcrcl}
\textrm{(Horn formulas)} &  & H & ::= & p\,|\,p\impl H\\
\end{array}
$$
where $p$ ranges over the set of propositional variables. Note that
for Horn formulas, in the vectorial notation $\vec{H}\impl p$, the
vector components $H_i$ are necessarily propositional variables, i.\,e.,
any Horn formula is of the form $\vec q\impl p$.

The Horn fragment is obtained by restricting sequents as follows:
\begin{enumerate}
\item contexts are restricted to \emph{Horn contexts}, i.\,e., contexts where all formulas are Horn formulas;
\item term sequents are restricted to atomic sequents, i.\,e., term
  sequents are of the form $\seqt{\Gamma}tp$.
\end{enumerate}
As a consequence, the $\lambda$-abstraction construction and the rule
$RIntro$, that types it, are no longer needed. The restricted typing
rules are presented in Figure~\ref{fig:lambda-bar-Horn}. 
\begin{figure}[tb]\caption{Typing rules of $\olH$}\label{fig:lambda-bar-Horn}
$$
\begin{array}{c}
\infer[\LAx]{\seql{\Gamma}{\tuple{}}{p}{p}}{}\quad\quad\infer[\LIntro]{\seql{\Gamma}{u::l}{p\impl H}{ q}}{\Gamma\vdash u:p&
\seql{\Gamma}{l}{H}{q}}\\\\
\infer[\App]
 {\seqt{\Gamma}{\der{y}{l}}{p}}
 {\seql{\Gamma}{l}{H}{p}\qquad\qquad(y:H)\in\Gamma}
\end{array}
$$
\end{figure}

\section{Coinductive representation of proof search in lambda-bar}\label{sec:coindrepr}

We want to represent the whole search space for cut-free proofs in
$\ol$. This is profitably done with coinductive structures. Of course,
we only consider locally correct proofs. Since proof search may fail
when infinite branches occur (depth-first search could be trapped
there), we will consider such infinite proofs as proofs in an extended
sense and represent them as well, thus we will introduce expressions
that comprise all the possible well-founded and non-wellfounded proofs
in cut-free $\ol$.

The raw syntax of these possibly non-wellfounded proofs is presented as follows
$$ N ::=_\cosign \lambda x^A.N\,|\,  x \tuple{N_1,\ldots,N_k}\enspace,$$
yielding the (co)terms of system $\cool$ (read coinductively, as indicated by the index $\cosign$). Note that instead of a formal class of lists $l$ as in the $\ol$-system, we adopt here the more intuitive notation $\tuple{N_1,\ldots,N_k}$  to represent finite lists.

Since the raw syntax is interpreted coinductively, also the typing
rules have to be interpreted coinductively, which is symbolized by the
double horizontal line in Figure~\ref{fig:co-lambda-bar}, a notation
that we learnt from Nakata, Uustalu and Bezem~\cite{NUB}. (Of course, the formulas/types stay
inductive.)
As expected, the restriction of the typing relation to the finite $\ol$-terms coincides with the typing relation of the $\ol$ system:

\begin{figure}[tb]\caption{Typing rules of $\cool$}\label{fig:co-lambda-bar}
$$
\begin{array}{c}
\infer=[\RIntro]{\seqt\Gamma{\lambda x^A.t}{A\impl B}}{\seqt{\Gamma,x:A}
tB}\quad\quad
\infer=[\LVecIntro]
 {\seqt{\Gamma}{x\tuple{N_1,\ldots,N_k}}{p}}
 {(x:B_1,\ldots,B_k\impl p)\in\Gamma\quad\seqt\Gamma{N_i}{B_i}, i=1,\ldots, k}
\end{array}
$$
\end{figure}

\begin{lemma}
\label{lem:equiv-typability-lambda-bar-terms}
For any $t\in\ol$,  $\seqt\Gamma t A$ in $\ol$ iff $\seqt\Gamma t A$ in $\cool$.
\end{lemma}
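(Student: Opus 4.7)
The plan is to prove both directions by exploiting the fact that $t$ is a \emph{finite} term, so that even though $\cool$'s typing is coinductive, structural induction on $t$ remains available. The extra subtlety is purely notational: $\ol$ separates terms from lists and uses a two-rule decomposition ($\LAx$/$\LIntro$) to build list typings, while $\cool$ packages this into the single rule $\LVecIntro$ that directly types $x\tuple{N_1,\ldots,N_k}$. So I need a small bridging fact about list typings.

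First I would record the auxiliary observation already used inside the proof of Lemma~\ref{lem:gen-lambda-bar}: in $\ol$, $\seql{\Gamma}{\tuple{t_1,\ldots,t_k}}{B}{p}$ holds iff $B = B_1\impl\cdots\impl B_k\impl p$ for some $B_1,\ldots,B_k$ with $\seqt{\Gamma}{t_i}{B_i}$ for every $i$. This is proved by a straightforward induction on $k$ using $\LAx$ and $\LIntro$.

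For the direction $\ol \Rightarrow \cool$, I would do induction on the derivation in $\ol$. The $\RIntro$ case is immediate (the rule is literally the same, read coinductively). For $\App$, the premise $\seql{\Gamma}{l}{A}{p}$ unfolds via the auxiliary fact above into $A = B_1\impl\cdots\impl B_k\impl p$ together with typings $\seqt{\Gamma}{t_i}{B_i}$ in $\ol$, whence $x:B_1,\ldots,B_k\impl p\in\Gamma$; applying the IH to each $t_i$ and then $\LVecIntro$ gives the desired $\cool$ typing of $x\tuple{t_1,\ldots,t_k}$. Note that this half does not require any coinductive reasoning: the $\cool$ derivation we build is finite and therefore also a legitimate greatest-fixed-point inhabitant.

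For the direction $\cool \Rightarrow \ol$, I would do induction on the \emph{structure} of the finite term $t\in\ol$. Inversion for $\cool$ is sound (one-step inversion of coinductive rules is just unfolding the defining equations once), and gives, for $t = \lambda x^A.t'$, that $A' = A\impl B$ and $\seqt{\Gamma,x:A}{t'}{B}$ in $\cool$; the IH plus $\RIntro$ conclude. For $t = x\tuple{t_1,\ldots,t_k}$, inversion of $\LVecIntro$ gives $x:B_1\impl\cdots\impl B_k\impl p\in\Gamma$ and $\seqt{\Gamma}{t_i}{B_i}$ in $\cool$; the IH on each (finitely many) $t_i$ yields $\ol$-typings, which the auxiliary fact assembles into a list typing $\seql{\Gamma}{\tuple{t_1,\ldots,t_k}}{B_1\impl\cdots\impl B_k\impl p}{p}$, and $\App$ finishes.

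The only point that requires care is the backward direction, since one might worry that a coinductive derivation supports an infinite tower of inversions; however $t$ itself is finite, so the induction on $t$ terminates and each inversion step is just a single unfolding. No appeal to regularity or to a bisimulation principle is needed here, which keeps the proof essentially routine.
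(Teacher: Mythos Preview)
Your proposal is correct and follows essentially the same approach as the paper: the paper's proof is the one-line ``By induction on $t$, with the help of Lemma~\ref{lem:gen-lambda-bar}'', and your argument is precisely a spelled-out version of this, using the list-typing characterisation that is stated inside the proof of Lemma~\ref{lem:gen-lambda-bar}. The only cosmetic difference is that you phrase the forward direction as induction on the $\ol$-derivation rather than on $t$, but since $\ol$ is syntax-directed these coincide.
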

\begin{proof}
By induction on $t$, with the help of  Lemma \ref{lem:gen-lambda-bar}.
\end{proof}

\begin{example}
\label{ex:Churchinfty}
Consider $\omega:=\lambda f^{p\impl
  p}.\lambda x^p.N$ with $N = f\tuple N$ of type $p$. This infinite
term $N$ is also denoted $f^\infty$.
\end{example}

It is quite common to describe elements of coinductive syntax by
(systems of) fixed point equations. As a notation on the
\emph{meta-level} for unique solutions of fixed-point equations, we
will use the binder $\fix$ for the solution, writing $\fix\, N. M$,
where $N$ typically occurs in the term $M$. Intuitively, $\fix\, N. M$
is the $N$ s.\,t. $N=M$. (The letter $\nu$ indicates interpretation in
coinductive syntax.)

\begin{example}
\label{ex:Churchinftycont}
$\omega$ of Example~\ref{ex:Churchinfty} can be written as
$\lambda f^{p\impl p}.\lambda x^p.\fix\, N.f\tuple N$.
$\seqt{\Gamma,f:p\impl p,x:p}{\fix\, N.f\tuple N}p$ is seen
coinductively, so we get $\seqt\Gamma\omega{(p\impl p)\impl p\impl p}$.
\end{example}

We now come to the representation of whole search spaces.  The set of
coinductive cut-free $\ol$-terms with finite numbers of elimination
alternatives is denoted by $\coolfs$ and is given by the following
grammar:

$$
\begin{array}{lcrcl}
\textrm{(co-terms)} &  & N & ::=_\cosign & \lambda x^A.N\,|\, \ns{E_1}{E_n}\\ 
\textrm{(elim. alternatives)} &  & E & ::=_\cosign & x \tuple{N_1,\ldots,N_k}\\
\end{array}
$$
where both $n,k\geq0$ are arbitrary. Note that summands cannot be lambda-abstractions.\footnote{The division into two syntactic categories also forbids the generation of an infinite sum (for which $n=2$ would suffice had the categories for $N$ and $E$ been amalgamated).} We will often use $\fs i{E_i}$
instead of $\ns{E_1}{E_n}$ 
if the dependency of $E_i$ on $i$
is clear, as well as the number of elements. Likewise, we write $\fl i
{N_i}$ instead of $\tuple{N_1,\ldots,N_k}$. If $n=0$, we write $\oo$
for $\ns{E_1}{E_n}$. 
If $n=1$, we write $E_1$ for $\ns{E_1}{E_n}$
(in particular
this injects the category of elimination alternatives into
the category of co-terms) and do as if $+$ was a binary operation on
(co)terms. However, this will always have a unique reading in terms of
our raw syntax of $\coolfs$. In particular, this reading makes $+$
associative and $\oo$ its neutral element.

Co-terms of $\coolfs$ will also be called B\"ohm forests. Their
coinductive typing rules are the ones of $\cool$, together with the rule given in
Figure~\ref{fig:co-lambda-bar-sum}, where the sequents for (co)terms
and elimination alternatives are not distinguished notationally.

\begin{figure}[tb]\caption{Extra typing rule of $\coolfs$ w.\,r.\,t.~$\cool$}\label{fig:co-lambda-bar-sum}
$$
\begin{array}{c}
\infer=[\Alts]{\seqt{\Gamma}{\ns{E_1}{E_n}}p}{\seqt\Gamma{E_i}p, i=1,\ldots, n}\\\\
\end{array}
$$
\end{figure}
Notice that $\seqt{\Gamma}{\oo}p$ for all $\Gamma$ and $p$.

Below we consider sequents $\seq{\Gamma}A$ with $\Gamma$ a context and
$A$ an implicational formula (corresponding to term sequents of $\ol$
without proof terms -- in fact, $\seq{\Gamma}A$ is nothing but the
pair consisting of $\Gamma$ and $A$, but which is viewed as a problem
description: to prove formula $A$ in context $\Gamma$).

\begin{definition}\label{def:sol}
  The function ${\mathcal S}$, which takes a sequent $\seq{\Gamma}A$
  and produces a B\"ohm forest which is a coinductive representation
  of the sequent's solution space, is given corecursively as follows:
  In the case of an implication,
$$\sol{\Gamma}{A\supset B} := \lambda x^A.\sol{\Gamma,x:A}{B}\enspace,$$
since $\RIntro$ is the only way to prove the implication.

In the case of an atom $p$, for the definition of $\sol{\Gamma}{p}$, let $y_i:A_i$ be the $i$-th variable in $\Gamma$ with $A_i$ of the form $\vec{B_i}\impl p$. Let $\vec{B_i} = B_{i,1},\ldots,B_{i,k_i}$. Define $N_{i,j}:=\sol{\Gamma}{B_{i,j}}$. Then, $E_i:=y_i\fl j{N_{i,j}}$, and finally,
$$\sol{\Gamma}{p} := \fs i{E_i}\enspace.$$
This is more sloppily written as
$$\sol{\Gamma}{p} :=  \fs{{y:\vec{B}\supset p\in\Gamma}} {y\fl{j}{\sol{\Gamma}{B_j}}}\enspace.$$
In this manner, we can even write the whole definition in one line:
$$\sol{\Gamma}{\vec A \impl p} :=  \lambda \vec x:\vec A.\fs{{y:\vec{B}\supset p\in\Delta}} {y\fl{j}{\sol{\Delta}{B_j}}}\quad\mbox{with $\Delta:=\Gamma,\vec x:\vec A$}$$
\end{definition}

This is a well-formed definition: for every $\Gamma$ and $A$, $\sol\Gamma A$ is a B\"ohm forest and as such rather a semantic object.

\begin{lemma}
  Given $\Gamma$ and $A$, the typing  $\seqt\Gamma{\sol\Gamma A}A$ holds in $\coolfs$.
\end{lemma}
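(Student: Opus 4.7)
My plan is to prove this by coinduction, viewing the coinductive typing relation of $\coolfs$ as the greatest fixed point of the monotone operator induced by the rules and exhibiting that the family
$$R = \{\, (\Gamma, \sol{\Gamma}{A}, A) : \Gamma \text{ a context}, A \text{ a formula} \,\}$$
is backward-closed under those rules (i.\,e.\ is a post-fixed point of the operator). The structure of the argument is then forced by the one-line form of Definition~\ref{def:sol}: each coinductive step in building the typing derivation mirrors exactly one corecursive clause in the definition of $\sol$.

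Concretely, I would fix $\Gamma$ and $A$, decompose $A = \vec{A}\impl p$, and unfold $\sol{\Gamma}{A}$ to
$$\lambda \vec x:\vec A.\fs{y:\vec{B}\impl p\in\Delta}{y\fl{j}{\sol{\Delta}{B_j}}}\enspace,\quad \Delta := \Gamma,\vec x:\vec A\enspace.$$
Then I would apply $\RIntro$ a finite number of times (one per component of $\vec A$), reducing the obligation to $\seqt{\Delta}{\fs{y:\vec B\impl p\in\Delta}{y\fl{j}{\sol{\Delta}{B_j}}}}{p}$. A single use of $\Alts$ (vacuously applicable when no $y:\vec B\impl p$ occurs in $\Delta$, in which case the sum is $\oo$) reduces this to one obligation $\seqt{\Delta}{y\fl{j}{\sol{\Delta}{B_j}}}{p}$ per matching $y$. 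Each of these is handled by $\LVecIntro$, leaving, for each $y$ and each $j$, the sub-obligation $\seqt{\Delta}{\sol{\Delta}{B_j}}{B_j}$, which is precisely another instance of the statement we are proving and is the point at which the coinductive hypothesis is invoked.

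The only delicate point is guardedness: for the coinductive argument to be legitimate, each appeal to the coinductive hypothesis must be separated from the previous one by at least one actual rule instance in the derivation being constructed. This is immediate here, because between two consecutive appeals there always lies at least the block consisting of $\RIntro$-steps (possibly zero), one $\Alts$, and one $\LVecIntro$, with $\Alts$ and $\LVecIntro$ both certainly present. Equivalently, on every input $(\Gamma, \vec A\impl p)$ the definition of $\sol$ produces a head constructor — a $\lambda$ if $\vec A$ is non-empty, otherwise a sum — before recurring, so there is no vacuous unfolding. I do not expect any other obstacle: type uniqueness (Lemma~\ref{lem:type-unique-lambda-bar}) is not needed, since we are synthesising rather than checking, and the finitary character of each head of $\sol{\Gamma}{A}$ (finite abstraction block, finite sum, finite argument list) means only well-defined rules of $\coolfs$ are ever invoked.
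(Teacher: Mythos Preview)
Your argument is correct and is exactly the natural coinductive proof. The paper in fact states this lemma without proof, so there is no approach to compare against.

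One small technical remark on presentation: you announce the proof as showing that $R=\{(\Gamma,\sol{\Gamma}{A},A)\}$ is a post-fixed point of the rule operator, but $R$ alone is not literally backward-closed, because the premises of $\Alts$ are sequents for elimination alternatives and are not of the form $(\Delta,\sol{\Delta}{B},B)$. To make the post-fixed-point formulation precise you would adjoin $R_E=\{(\Delta,\,y\fl{j}{\sol{\Delta}{B_j}},\,p)\mid (y:\vec B\impl p)\in\Delta\}$ and show $R\cup R_E$ backward-closed --- exactly as the paper does later in the proof of Proposition~\ref{prop:adequacy-general-case}. Your informal guarded-corecursion phrasing (``apply $\Alts$ then $\LVecIntro$, then invoke the coinductive hypothesis'') already absorbs this intermediate step, so the issue is purely one of matching the formal framing you set up at the start.
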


Let us illustrate  the  function $\cal S$ at work with some examples.

\begin{example}
\label{ex:Church}
We consider first the formula $A= (p\impl p)\impl p\impl p$ and the empty context. We have:
$$
\sol{}{(p\impl p)\impl p\impl p}\\
=\lambda f^{p\impl p}.\lambda x^p. \sol{f:p\impl p,x:p}p
$$
Now, observe that $\sol{f:p\impl p,x:p}p=\bs{f\tuple{\sol{f:p\impl p,x:p}p}}x$. 
We
identify $\sol{f:p\impl p,x:p}p$ as the solution for $N$ of the equation $N=\bs{f\tuple N}x$. 
Using $\fix$ as means to communicate solutions of fixed-point equations on the \emph{meta-level} as for $\cool$, we have
$$
\sol{}{(p\impl p)\impl p\impl p}
=\lambda f^{p\impl p}.\lambda x^p. \fix\, N. \bs{f\tuple N}x 
$$

By unfolding of the fixpoint and by making a choice at each of the elimination alternatives, we can \emph{collect} from this coterm as the finitary solutions of the sequent all the Church numerals ($\lambda f^{p\impl p}.\lambda x^p.f^n\tuple{x}$ with $n\in\mathbb{N}_0$), together with the infinitary solution $\lambda f^{p\impl p}.\lambda x^p.f^\infty$,
studied before as example for $\cool$ (corresponding to always making the
$f$-choice at the elimination alternatives).
\end{example}

\begin{example}
\label{ex:Horn}
We consider now an example in the Horn fragment. Let $\Gamma=x:p\impl
q\impl p,y:q\impl p\impl q,z:p$ (again with $p\neq q$). Note that the
solution spaces of $p$ and $q$ relative to this sequent are mutually
dependent and they give rise to the following system of equations:
$$
\begin{array}{rcl}
N_p&=&\bs{x\tuple{N_p,N_q}}z\\ 
N_q&=&y\tuple{N_q,N_p}\\
\end{array}
$$
and so we have
$$
\begin{array}{rcl}
\sol\Gamma p&=& \fix\, N_p.\bs{x\tuple{N_p,\fix\, N_q.y\tuple{N_q,N_p}}}z\\
\sol\Gamma q&=& \fix\, N_q.y\tuple{N_q,\fix\, N_p. \bs{x\tuple{N_p,N_q}}z}\\ 
\end{array}
$$
Whereas for $p$ we can collect one finite solution ($z$), for $q$ we
can only collect infinite solutions. Because in the Horn case the
recursive calls of the $\cal S$ function are all relative to the same
(initial) context, in this fragment the solution space of a sequent
can always be expressed as a finite system of equations (one for each
atom occurring in the sequent), see Theorem~\ref{thm:Horn}.
\end{example}

\begin{example}
\label{ex:dn-Pierce}
Let us consider one further example where $A=((((p\impl q)\impl
p)\impl p)\impl q)\impl q$ (a formula that can be viewed as double
negation of Pierce's law, when $q$ is viewed as absurdity). We have
the following (where in sequents we omit formulas on the LHS)
$$
\begin{array}{rcl}
N_0&=&\sol{}{A}=\lambda x^{(((p\impl q)\impl p)\impl p)\impl q}. N_1\\
N_1&=&\sol{x}{q}= x\tuple{N_2}\\
N_2&=&\Sol{x}{((p\impl q)\impl p)\impl p}=\lambda y^{(p\impl q)\impl p}. N_3\\
N_3&=&\sol{x,y}{p}= y\tuple{N_4}\\
N_4&=&\sol{x,y}{p\impl q}=\lambda z^{p}. N_5\\
N_5&=&\sol{x,y,z}{q}= x\tuple{N_6}\\
N_6&=&\Sol{x,y,z}{((p\impl q)\impl p)\impl p}=\lambda y_1^{(p\impl q)\impl p}. N_7\\
N_7&=&\sol{x,y,z,y_1}{p}= \bs{\bs{y\tuple{N_8}}z}{y_1\tuple{N_8}}\\ 
N_8&=&\sol{x,y,z,y_1}{p\impl q}= \lambda z_1^{p}. N_9\\
N_9&=&\sol{x,y,z,y_1,z_1}{q}\\
\end{array}
$$
Now, in $N_9$ observe that $y,y_1$ both have type $(p\impl q)\impl p$
and $z,z_1$ both have type $p$, and we are back at $N_5$ but with the
duplicates $y_1$ of $y$ and $z_1$ of $z$. Later, we will call
 this duplication phenomenon
 \emph{co-contraction}, and we will give a finitary description of $N_0$ and, more generally, of all $\sol{\Gamma}{A}$, see Theorem~\ref{thm:FullProp}. Of course, by taking the middle alternative in $N_7$, we obtain a finite proof, showing that $A$ is provable in $\ol$.
\end{example}

We now define a membership semantics for co-terms and elimination alternatives of $\coolfs$ in terms of sets of (co)terms in $\cool$.

The \emph{membership relations} $\colr MN$ and $\colra ME$ are contained in $\cool\times\coolfs$ and $\cool\times\coolfsE$ respectively (where $\coolfsE$ stands for the set of elimination alternatives of $\coolfs$) and are given coinductively by the rules in Fig. \ref{fig:collect}.

\begin{figure}[tb]\caption{Membership relations}\label{fig:collect}
$$
\begin{array}{c}
\infer=[]{\colr{\lambda x^A.M}{\lambda x^A.N}}{\colr{M}{N}}\quad\quad
\infer=[\mbox{(for some $i$)}]{\colr{M}{\ns{E_1}{E_n}}}{\colra{M}{E_i}}\\\\
\infer=[]{\colra{x\tuple{M_1,\ldots,M_k}}{x\tuple{N_1,\ldots,N_k}}}{\colr{M_1}{N_1}&\ldots&\colr{M_k}{N_k}}
\end{array}
$$
\end{figure}

\begin{proposition}\label{prop:adequacy-general-case}
For any $N\in\cool$, $\colr N{\sol\Gamma A}$ iff $\seqt\Gamma N A$ in $\cool$.
\end{proposition}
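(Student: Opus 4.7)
My plan is to prove both directions by coinduction, with a case split on the structure of the formula $A$. The key observation is that Definition~\ref{def:sol} makes the outermost shape of $\sol\Gamma A$ depend only on $A$ --- an abstraction if $A$ is an implication, a finite sum of headed applications if $A$ is an atom --- and that the $\cool$-typing rules admit a matching inversion, since the proof of Lemma~\ref{lem:gen-lambda-bar} transfers verbatim to $\cool$: a coinductive derivation still has a root rule, and that rule is forced by the shape of the right-hand side.

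For the direction $\seqt\Gamma N A \Rightarrow \colr N{\sol\Gamma A}$, I would take as candidate post-fixed point the relation $R := \{(N,\sol\Gamma A) \mid \seqt\Gamma N A\text{ in }\cool\}$ and verify that every pair in $R$ arises as the conclusion of some rule of Figure~\ref{fig:collect} whose premises lie again in $R$. If $A = A_1\impl A_2$, inversion on typing yields $N = \lambda x^{A_1}.N'$ with $\seqt{\Gamma,x:A_1}{N'}{A_2}$, and the first membership rule applies, reducing to a pair in $R$. If $A = p$, inversion yields $N = y\tuple{N_1,\ldots,N_k}$ with $y:\vec B\impl p\in\Gamma$ and $\seqt\Gamma{N_j}{B_j}$ for all $j$; since $\sol\Gamma p$ enumerates $y$ as some summand $y_{i_0}\fl j{\sol\Gamma{B_{i_0,j}}}$ with matching type vector $\vec{B_{i_0}}=\vec B$, the second and then the third membership rule apply, reducing to pairs in $R$. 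The converse is symmetric: I take $R' := \{(\Gamma,N,A) \mid \colr N{\sol\Gamma A}\}$ and show it is closed under the $\cool$-typing rules read bottom-up, using inversion of membership against the known shape of $\sol\Gamma A$ --- the first rule of Figure~\ref{fig:collect} in the implication case forces an abstraction (whence $\RIntro$), while the sum and application rules in the atom case select an index $i_0$ that supplies both the head variable $y_{i_0}$ and its declaration in $\Gamma$ needed for $\LVecIntro$.

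The main place where care is required, and the only step that is not purely routine, is the matching between the enumeration of declarations $y_i:\vec{B_i}\impl p\in\Gamma$ used in Definition~\ref{def:sol} and the premise selection performed by $\LVecIntro$ in $\cool$. In both directions one must line up exactly one choice of summand in $\sol\Gamma p$ with one choice of variable declaration of the appropriate type in $\Gamma$, and this bookkeeping is where I would be most careful in a formal write-up; the use of coinduction and the case split on whether $A$ is atomic or an implication are otherwise straightforward.
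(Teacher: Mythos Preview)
Your proposal is correct and takes essentially the same approach as the paper: coinduction in both directions, with the same candidate relations $R$ and $R'$, and the same case split driven by the shape of $\sol\Gamma A$. The one point the paper handles more explicitly is in the ``if'' direction: because $\colbase$ and $\colebase$ are defined by \emph{mutual} coinduction, the paper pairs your $R$ with an auxiliary relation $R_E\subseteq\cool\times\coolfsE$ and shows $R\subseteq\colbase$ and $R_E\subseteq\colebase$ simultaneously by backwards closure, remarking that ``$R\subseteq\colbase$ cannot be proven alone''. Your phrase ``the second and then the third membership rule apply, reducing to pairs in $R$'' composes two rule applications into one step; to make that formally a post-fixed-point argument you need either the auxiliary $R_E$ (as the paper does) or the observation that inlining the $\colebase$ rule into the sum rule gives an equivalent single-sorted coinductive definition of $\colbase$. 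This is bookkeeping rather than a conceptual gap, and is precisely the place you already flagged as ``where I would be most careful''.
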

\begin{proof}
``If''. Consider the relations
$$
\begin{array}{l}
R:=\{(N,{\sol\Gamma A})\mid \seqt\Gamma N A\}\\
R_E:=\{(x\fl i{N_i},x\fl i{\sol\Gamma{B_i}})\mid (x:B_1,\ldots,B_k\impl p)\in\Gamma\wedge\seqt\Gamma {x\tuple{N_1,\ldots,N_k}} p\}
\end{array}
$$
It suffices to show that $R\subseteq\colbase$, but this cannot be
proven alone since $\colbase$ and $\colebase$ are defined
simultaneously. We also prove $R_E\subseteq\colebase$, and to prove both by
coinduction on the membership relations, it suffices to show that the
relations $R$, $R_E$ are \emph{backwards closed}, i.\,e.:
\begin{enumerate}
\item $({\lambda x^A.M},{\lambda x^A.N})\in R$ implies $(M,N)\in R$;
\item $({M},{\ns{E_1}{E_n}})\in R$ implies for some $i$, $(M,E_i)\in R_E$;
\item $({x\tuple{M_1,\ldots,M_k}},{x\tuple{N_1,\ldots,N_k}})\in R_E$ implies for all $i$, $({{M_i}},{{N_i}})\in R$
\end{enumerate}

We illustrate one case. Consider $(N,{\sol\Gamma A})\in R$, with
${\sol\Gamma A}=\ns{E_1}{E_n}$. We must show that, for some $i$,
$(N,E_i)\in R_E$. From ${\sol\Gamma A}=\ns{E_1}{E_n}$, we must have
$A=p$. Now, from $\seqt\Gamma N p$, there must exist
$(x:B_1,\ldots,B_k\impl p)\in\Gamma$ and $N_1,...,N_k$ s.\,t.
$N=x\tuple{N_1,\ldots,N_k}$. By definition of $\sol\Gamma A$, there is
$i$ s.\,t. $E_i=x\tuple{\sol\Gamma{B_1},\ldots,\sol\Gamma{B_k}}$.

``Only if''. By coinduction on the typing relation of $\cool$. This is
conceptually easier than the other direction since $\vdash$ is a
single coinductively defined notion. We define a relation $R$ for
which it is sufficient to prove $R\subseteq\vdash$:
$$
\begin{array}{l}
R:=\{(\Gamma,N,A)\mid \colr N{\sol\Gamma A}\}
\end{array}
$$
Proving $R\subseteq\vdash$ by coinduction amounts to showing that $R$ is backwards closed -- with respect to the typing relation of $\cool$, i.\,e., we have to show:
\begin{enumerate}
\item $(\Gamma,\lambda x^A.t,A\impl B)\in R$ implies $((\Gamma,x:A), t, B)\in R$;
\item $(\Gamma,x\tuple{N_1,\ldots,N_k},p)\in R$ implies the existence of $B_1,\ldots,B_k$ s.\,t. $(x:B_1,\ldots,B_k\impl p)\in\Gamma$ and, for all $i=1,\ldots,k$, $(\Gamma,N_i,B_i)\in R$.
\end{enumerate}
We show the second case (relative to rule $\LVecIntro$). So, we
have $\colr N{\sol\Gamma A}$ with $N=x\tuple{N_1,\ldots,N_k}$ and
$A=p$, and we need to show that, for some $(x:B_1,\ldots,B_k\impl
p)\in\Gamma$, we have, for all $i$, $\colr{N_i}{\sol\Gamma{B_i}}$.
Since $A=p$, ${\sol\Gamma A}=\ns{E_1}{E_n}$. Hence, the second rule for $\colbase$
was used to infer $\colr N{\sol\Gamma A}$, i.\,e., there is a $j$ s.\,t.
$\colra N{E_j}$. Therefore, $E_j=x\tuple{M_1,\ldots,M_k}$ with terms $M_1$,
\ldots, $M_k$, and, for all $i$, $\colr {N_i}{M_i}$. By the definition
of $\sol\Gamma A$, this means that there are formulas $B_1$, \ldots,
$B_k$ s.\,t. $(x:B_1,\ldots,B_k\impl p)\in\Gamma$ and, for all $i$,
$M_i=\sol\Gamma{B_i}$.
\end{proof}

\begin{example}
\label{ex:Pierce}
Let us  consider the case of Pierce's law that is not valid intuitionistically. We have (for $p\neq q$):
$$
\sol{}{((p\impl q)\impl p)\impl p}
=\lambda x^{(p\impl q)\impl p}. x\tuple{\lambda y^p.\oo}
$$
The fact that we arrived at $\oo$ and found no elimination alternatives on the way \emph{annihilates} the co-term
and implies there are no
terms in the solution space of $\seq{}{((p\impl q)\impl p)\impl p}$ (hence no
proofs, not even infinite ones). 
\end{example}

\begin{corollary}[Adequacy of the co-inductive representation of proof search in $\ol$]
For any $t\in\ol$, we have $\colr t{\sol\Gamma A}$ iff $\seqt\Gamma t A$ (where the latter is the
inductive typing relation of $\ol$).
\end{corollary}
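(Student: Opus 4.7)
The plan is to derive this corollary by composing two results already established in the paper, rather than doing any fresh coinductive reasoning. The key observation is that every $t\in\ol$ is in particular a (finite) element of $\cool$, so Proposition~\ref{prop:adequacy-general-case} applies to it verbatim.

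First, I would fix an arbitrary $t\in\ol$, a context $\Gamma$, and a formula $A$, and note that $t$ can be viewed as a $\cool$-term (the raw syntax of $\cool$ is just the coinductive reading of the same grammar, so the finite $\ol$-terms embed into $\cool$). Applying Proposition~\ref{prop:adequacy-general-case} with $N:=t$ then yields the equivalence $\colr{t}{\sol\Gamma A}$ iff $\seqt\Gamma t A$ in $\cool$.

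Second, I would invoke Lemma~\ref{lem:equiv-typability-lambda-bar-terms}, which states precisely that for $t\in\ol$, $\cool$-typability and $\ol$-typability coincide: $\seqt\Gamma t A$ in $\cool$ iff $\seqt\Gamma t A$ in $\ol$. Chaining the two equivalences gives $\colr{t}{\sol\Gamma A}$ iff $\seqt\Gamma t A$ in $\ol$, as required.

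There is no real obstacle here; the corollary is essentially the restriction of Proposition~\ref{prop:adequacy-general-case} to finite proof terms, and Lemma~\ref{lem:equiv-typability-lambda-bar-terms} was proved earlier expressly to bridge the inductive and coinductive readings of the typing rules. The only thing to be mildly careful about is that membership $\colbase$ is defined on pairs in $\cool\times\coolfs$, so when writing $\colr{t}{\sol\Gamma A}$ for $t\in\ol$ one is implicitly using the embedding of $\ol$ into $\cool$; this is harmless but worth mentioning once.
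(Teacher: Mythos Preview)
Your proposal is correct and matches the paper's own proof exactly: the corollary is obtained by combining Proposition~\ref{prop:adequacy-general-case} (applied to the finite term $t$ viewed in $\cool$) with Lemma~\ref{lem:equiv-typability-lambda-bar-terms}. Your added remark about the implicit embedding of $\ol$ into $\cool$ is accurate and harmless.
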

\begin{proof}
By the proposition  above and Lemma \ref{lem:equiv-typability-lambda-bar-terms}.
\end{proof}


\section{Finitary representation of proof search in lambda-bar}\label{sec:finitaryrepr}

In the first section we define a calculus of finitary
representations. In the third section we obtain our main result
(Theorem \ref{thm:FullProp}): given $\seq\Gamma C$, there is a
finitary representation of $\sol\Gamma C$ in the finitary calculus.
To make the proof easier to understand, we first develop in the second
section the particular case of the Horn fragment.

\subsection{The finitary calculus}\label{subsec:finitary-calculus}
The set of inductive cut-free $\ol$-terms with finite numbers of
elimination alternatives, and a fixpoint operator is denoted by
$\olfsfix$ and is given by the following grammar (read inductively):

$$
\begin{array}{lcrcl}
\textrm{(terms)} &  & N & ::= & \lambda x^A.N\,|\, \gfp\,{X}.\ns{E_1}{E_n}\,|\,X\\
\textrm{(elim. alternatives)} &  & E & ::= & x \tuple{N_1,\ldots,N_k}\\
\end{array}
$$
where $X$ is assumed to range over a countably infinite set of
\emph{fixpoint variables} (letters $Y$, $Z$ will also be used to range
over fixpoint variables that may also be thought of as
meta-variables), and where both $n,k\geq0$ are arbitrary. Below, when
we refer to \emph{finitary terms} we have in mind the terms of
$\olfsfix$. The fixed-point operator is called $\gfp$ (``greatest
fixed point'') to indicate that its semantics is (now) defined in
terms of infinitary syntax, but there, fixed points are unique. Hence,
the reader may just read this as ``the fixed point''.

We now give a straightforward interpretation of the formal fixed
points (built with $\gfp$) of $\olfsfix$ in terms of the coinductive
syntax of $\coolfs$ (using the $\fix$ operation on the meta-level).

\begin{definition}\label{def:interpretation}
We call \emph{environment} a function from the set of fixpoint
variables into the set of (co)terms of $\coolfs$. The interpretation
of a finitary term (relative to an environment) is a (co)term of
$\coolfs$ given via a family of functions
$\interp{-}{\xi}:\olfsfix\rightarrow\coolfs$  indexed by
environments, which is recursively defined as follows:

$$
\begin{array}{rcll}
\interp{X}\xi& = & \xi(X)\\
\interp{\lambda x^A.N}{\xi}& = & \lambda x^A.\interp N\xi\\
\interp{\gfp\,{X}.\s{i}{E_i}}\xi&= & \fix\, N.\s i{\interp {E_i}{\xi\cup[X\mapsto N]}}\\
\interp{x \tuple{N_1,\ldots,N_k}}\xi&= & x \tuple{\interp{N_1}\xi,\ldots,\interp{N_k}\xi}\\
\end{array}
$$
where the notation $\xi\cup[X\mapsto N]$ stands for the environment obtained from $\xi$ by setting $X$ to $N$.
\end{definition}

Remark that the recursive definition above has an embedded corecursive case (pertaining to the $\gfp$-operator).
Its definition is well-formed since every elimination alternative starts with a head/application variable and the occurrences of $N$ are thus guarded.

When a finitary term $N$ has no free occurrences of fixpoint variables,  all environments determine the same coterm,
and in this case  we simply write $\interpwe N$ to denote that coterm.


\subsection{Equivalence of the
representations: Horn case}\label{subsec:equivalence-Horn-case}

\begin{theorem}[Equivalence for the Horn fragment]\label{thm:Horn}
Let $\Gamma$ be a Horn context. Then, for any atom $r$, there exists
$N_r\in\olfsfix$ with no free occurrences of fixpoint variables such
that $\interpwe{N_r}=\sol\Gamma r$.
\end{theorem}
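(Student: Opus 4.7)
The plan is to exploit the special feature of the Horn fragment already visible in Example~\ref{ex:Horn}: each recursive call $\sol\Gamma{q_j}$ appearing inside $\sol\Gamma{s}$ is relative to \emph{the same} context $\Gamma$, because every Horn hypothesis $y:\vec q\impl s$ has an \emph{atomic} argument vector $\vec q$. Hence, restricted to atoms, the family of solution spaces is indexed by a \emph{finite} set. I would start by letting $\mathcal{A}$ be the finite set of atoms that appear in $\Gamma$ together with $r$, introducing a fresh fixpoint variable $X_s$ for each $s\in\mathcal{A}$, and setting
\[F_s\ :=\ \fs{y:\vec q\impl s\in\Gamma}{y\fl{j}{X_{q_j}}}\]
(a finite sum of elimination alternatives, equal to $\oo$ when no hypothesis of $\Gamma$ has head atom $s$). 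A direct unfolding of Definition~\ref{def:sol} yields that the family $(\sol\Gamma s)_{s\in\mathcal{A}}$ satisfies the mutual system $\sol\Gamma s=\interp{F_s}{\xi_{\cal S}}$, where $\xi_{\cal S}(X_q):=\sol\Gamma q$. Because every free occurrence of an $X_q$ inside $F_s$ is guarded by an application head, this coinductive system has a unique solution in $\coolfs$.

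The second step is to collapse this mutual system into a single closed term of $\olfsfix$ by Bek\v{i}c-style nesting of $\gfp$. Enumerating $\mathcal{A}=\{s_1,\ldots,s_m\}$ I define, by reverse induction on $k$, terms $M_{s_k}$ with free fixpoint variables among $\{X_{s_1},\ldots,X_{s_{k-1}}\}$ via
\[M_{s_k}\ :=\ \gfp X_{s_k}.\,F_{s_k}[X_{s_{k+1}}:=M_{s_{k+1}},\ldots,X_{s_m}:=M_{s_m}]\,.\]
The body has the required shape $\ns{E_1}{E_n}$ because each $F_s$ is already a sum of elimination alternatives and the substitutions only touch fixpoint variables sitting in argument position. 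I then close everything up by setting $N_{s_1}:=M_{s_1}$ and $N_{s_k}:=M_{s_k}[X_{s_1}:=N_{s_1},\ldots,X_{s_{k-1}}:=N_{s_{k-1}}]$ for $k\geq 2$; each $N_s$ is a closed finitary term, and $N_r$ is the desired member of this family.

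Finally, to verify $\interpwe{N_r}=\sol\Gamma r$ I would establish a routine substitution lemma
\[\interp{N[X:=M]}{\xi}\ =\ \interp{N}{\xi\cup[X\mapsto\interpwe{M}]}\quad\mbox{for closed $M$,}\]
and use it, together with an induction tracing the Bek\v{i}c construction, to show that the family $(\interpwe{N_s})_{s\in\mathcal{A}}$ satisfies the very same guarded coinductive system as $(\sol\Gamma s)_{s\in\mathcal{A}}$; uniqueness of guarded fixed points in $\coolfs$ then forces the two families to coincide pointwise. The main obstacle is the bookkeeping of the Bek\v{i}c nesting, in particular making the interplay between the formal substitutions of $\olfsfix$ and the meta-level $\fix$ of Definition~\ref{def:interpretation} precise enough to invoke uniqueness cleanly; once the substitution lemma is in place, guardedness, which comes for free from the application-headed shape of each elimination alternative in $F_s$, does the rest.
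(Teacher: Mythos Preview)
Your approach is correct and reaches the same conclusion, but it is organised differently from the paper's proof.

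The paper does not first set up a mutual system and then collapse it via an explicit Bek\v{i}c construction. Instead, it defines a single family $\N p{\vect{X:q}}$ of open $\olfsfix$-terms, parameterised by a vector $\vect{X:q}$ recording which atoms have already been assigned a fixpoint variable, by recursion on the number of atoms of $\seq\Gamma r$ \emph{not yet} in the vector: if $p$ already has a variable $X_i$ in the vector, return $X_i$; otherwise emit $\gfp X_p.\sum_{(y:\vec r\impl p)\in\Gamma} y\langle \N{r_j}{\vect{X:q},X_p:p}\rangle_j$. The correctness statement $\interp{\N p{\vect{X:q}}}{\xi}=\sol\Gamma p$ whenever $\xi(X_i)=\sol\Gamma{q_i}$ is then proved by induction on that same decreasing measure, and the theorem is the instance with empty vector. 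In effect the paper \emph{computes} the Bek\v{i}c nesting and proves it correct in one pass, with the order of nesting determined dynamically by the order in which atoms are encountered from $r$.

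Your route separates concerns: a clean mutual system $(F_s)_{s\in\mathcal A}$, an explicit Bek\v{i}c collapse $M_{s_k}\mapsto N_s$, a substitution lemma for $\interp{\cdot}{}$, and an appeal to uniqueness of guarded solutions. This is conceptually transparent and reusable, but pays for the modularity with the extra lemmas and bookkeeping you already flagged. The paper's approach is more economical precisely because it never forms the closed $N_s$ explicitly: working with open terms under environments $\xi$ that already carry the intended meanings $\sol\Gamma{q_i}$ makes the substitution lemma unnecessary, and the single induction replaces both the Bek\v{i}c construction and its verification.
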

\begin{proof}

Let us assume there are $k$ atoms occurring in $\seq\Gamma r$.
We define simultaneously $k$ functions $\N p{\vect{X:q}}$ (one for
each atom $p$ occurring in $\seq\Gamma r$), parameterized by a
vector of declarations of the form $X:q$.
The vector is written
$\vect{X:q}$ and is such that no fixpoint variable and no atom
occurs twice. The simultaneous definition is by recursion on the
number of atoms of $\seq\Gamma r$ not occurring in $\vect{X:q}$, and
is as follows:
$$
\N p{\vect{X:q}}= \left\{
\begin{array}{ll}
X_i&\mbox{if $p=q_i$}\\
\gfp\,{X_p}.{\s{(y:\vect {r}\impl p)\in\Gamma}{y \fl j{\N{r_j}{\vect
{X:q},X_p:p}}}}&\mbox{otherwise}
\end{array}
\right.
$$
where vector $\vect {X:q},X_p:p$ is obtained by adding the component
$X_p:p$ to the vector $\vect{X:q}$. Observe that only fixpoint
variables among the fixpoint variables declared in the vector have
free occurrences in $\N p{\vect{X:q}}$.

By induction on the number of atoms of (the fixed sequent) $\seq\Gamma
r$ not in (the variable) $\vect {X:q}$, we prove that:
\begin{equation}\label{eq:Horn-thm}
\interp{\N p{\vect{X:q}}}{\xi}=\sol\Gamma p\;\textrm{if}\;
\xi(X_i)=\sol\Gamma{q_i},\;\textrm{for any}\;i.
\end{equation}
Case $p=q_i$, for some $i$. Then,
$$
LHS=\interp{X_i}\xi=\xi(X_i)=\sol\Gamma{q_i}=RHS.
$$
Otherwise,
$$LHS=\interp{\gfp\,{X_p}.{\s{(y:\vect {r}\impl p)\in\Gamma}{y \fl j{\N{r_j}{\vect{X:q},X_p:p}}}}}\xi=N^\infty$$
where $N^\infty$ is given as the unique solution of the following equation:
\begin{align}
\label{eq:N-infty} N^\infty&=&\s {(y:\vect {r}\impl
p)\in\Gamma}{y\fl j {\interp
{\N{r_j}{\vect{X:q},X_p:p}}{\xi\cup[X_p\mapsto N^\infty]}}}
\end{align}
Now observe that, by I.H., the following equations (\ref{eq:sol})
and (\ref{eq:sol-two}) are equivalent.
\begin{align}
\label{eq:sol}
\sol\Gamma p&=& \s {(y:\vect {r}\impl p)\in\Gamma}{y\fl j {\interp {\N{r_j}{\vect{X:q},X_p:p}}{\xi\cup[X_p\mapsto\sol\Gamma p]}}}\\
\label{eq:sol-two} \sol\Gamma p&=& \s {(y:\vect {r}\impl
p)\in\Gamma}{y\fl j {\sol\Gamma {r_j}}}
\end{align}
By definition of $\sol\Gamma p$, (\ref{eq:sol-two}) holds; hence --
because of (\ref{eq:sol}) --
$\sol\Gamma p$ is the solution $N^\infty$ of (\ref{eq:N-infty}),
concluding the proof that $LHS=RHS$.

Finally, the theorem follows as the particular case of
(\ref{eq:Horn-thm}) where $p=r$ and the vector of fixpoint variable
declarations is empty.
\end{proof}

\subsection{Equivalence of the
representations: full implicational 
case}\label{subsec:equivalence-full-prop-case}

The main difference with exhaustive proof search in the case of Horn
formulas is that the backwards application of \RIntro\ brings new
variables into the context that may have the same type as an already
existing declaration, and so, for the purpose of
proof search, they should be treated the same way.

We illustrate this phenomenon with the following definition and lemma
and then generalize it to the form that will be needed for the main
theorem (Theorem~\ref{thm:FullProp}).

\begin{definition}
For $N$ and $E$ in $\coolfs$, we define $[\ns{x_1}{x_n}/y]N$
and $[\ns{x_1}{x_n}/y]E$ by simultaneous corecursion
as follows:
$$
\begin{array}{lcll}
{[}\ns{x_1}{x_n}/y](\lb x^A.N)&=&\lb x^A.[\ns{x_1}{x_n}/y]N\\
{[}\ns{x_1}{x_n}/y]\s i{E_i}&=&\s i{[\ns{x_1}{x_n}/y]E_i}\\
{[}\ns{x_1}{x_n}/y]\big(z\fl i{N_i}\big)&=&z\fl i{[\ns{x_1}{x_n}/y]N_i}&\textrm{if $z \neq y$}\\
{[}\ns{x_1}{x_n}/y]\big(y\fl i{N_i}\big)&=&\s{1\leq j\leq n}{x_j}\fl i{[\ns{x_1}{x_n}/y]N_i}\\
\end{array}
$$
\end{definition}

\begin{lemma}[Co-contraction: invertibility of contraction]\label{lem:cleavage}
If $x_1,x_2,y\notin\Gamma$, then
$$\sol{\Gamma,x_1:A,x_2:A}C=[\bs{x_1}{x_2}/y]\sol{\Gamma,y:A}C\enspace.$$
\end{lemma}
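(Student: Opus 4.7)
The plan is to prove the equation by a bisimulation-style coinduction on the co-inductive syntax of $\coolfs$. Both sides are generated corecursively --- the LHS directly by $\mathcal S$, the RHS by $\mathcal S$ followed by the corecursive substitution $[\bs{x_1}{x_2}/y]$ --- so the natural principle is to exhibit a relation on B\"ohm forests (with a companion relation on elimination alternatives) that contains the two sides and is closed under one-step observations.

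First I would generalize the statement over $\Gamma$ and $C$: fix $A$, $x_1$, $x_2$, $y$ pairwise distinct and prove the equality for every context $\Gamma'$ with $x_1,x_2,y\notin\Gamma'$ and every formula $C$. This strengthening is needed because in the implication case the context grows by a fresh binding, so the coinduction hypothesis must be available at arbitrary target contexts. Concretely, take
$$R := \{(\sol{\Gamma',x_1:A,x_2:A}{C},\, [\bs{x_1}{x_2}/y]\sol{\Gamma',y:A}{C}) \mid x_1,x_2,y\notin\Gamma'\}\enspace,$$
together with an analogous relation $R_E$ on elimination alternatives arising as $R$ is traversed, and show that $R$ and $R_E$ are closed under the observational structure of $\coolfs$.

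The argument then proceeds by case on $C$. If $C = A'\impl B$, both sides unfold to $\lambda x^{A'}.(\cdots)$ for $x$ chosen fresh from $\Gamma',x_1,x_2,y$, and the two bodies form another pair in $R$ obtained by replacing $\Gamma'$ with $\Gamma',x:A'$; the observation thus returns to $R$. If $C = p$ is an atom, both sides unfold to finite sums of elimination alternatives. The LHS enumerates the declarations $z:\vec B\impl p$ in $\Gamma',x_1:A,x_2:A$, while the RHS first enumerates such declarations in $\Gamma',y:A$ and then applies the substitution, which acts trivially on $z$-headed alternatives for $z\in\Gamma'$ but, by the clause for $y$-headed alternatives in the definition of $[\bs{x_1}{x_2}/y]$, rewrites the unique $y$-contribution (present precisely when $A$ has head atom $p$, i.\,e., $A = \vec B\impl p$) into the sum of an $x_1$- and an $x_2$-headed alternative with the same arguments. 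Up to the set-of-alternatives equivalence adopted in Section~\ref{sec:coindrepr}, the two sums then consist of the same alternatives, and each pair of corresponding arguments is of the shape $(\sol{\Gamma',x_1:A,x_2:A}{B_j},[\bs{x_1}{x_2}/y]\sol{\Gamma',y:A}{B_j})\in R$.

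The delicate point is the atomic case: one has to verify that the extra splitting the substitution introduces on the $y$-alternative matches exactly the two alternatives headed by $x_1$ and $x_2$ that $\mathcal S$ produces from the duplicated declarations $x_1:A,x_2:A$, and that when $A$ does not end in $p$ no spurious alternatives appear on either side (then $y$ contributes nothing to the RHS sum and neither do $x_1,x_2$ to the LHS). Once this bookkeeping is done, matching arguments lie in $R$ by construction, and coinduction closes the proof.
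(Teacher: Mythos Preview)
Your coinductive bisimulation argument is correct in spirit and matches the technique the paper uses. Note, however, that the paper does not prove Lemma~\ref{lem:cleavage} directly: it simply omits the proof and refers forward to Lemma~\ref{lem:cleavage-2}, the general co-contraction lemma for arbitrary $\Gamma\leq\Gamma'$, whose proof is exactly the backward-closure argument you sketch but carried out once for the general $[\Gamma'/\Gamma]$ operation. So you and the paper use the same method; you apply it to the special case while the paper applies it to the generalization and lets the special case fall out.

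One small technical point to tighten. Your relation $R$ is indexed by pairs of the shape $(\Gamma',x_1{:}A,x_2{:}A)$ versus $(\Gamma',y{:}A)$, with the distinguished declarations at the \emph{end}. In the implication case the fresh binding $x{:}A'$ is appended \emph{after} them, so the resulting pair is $\sol{\Gamma',x_1{:}A,x_2{:}A,x{:}A'}{B}$ versus $[\bs{x_1}{x_2}/y]\sol{\Gamma',y{:}A,x{:}A'}{B}$, which is not literally of the declared shape if contexts are treated as lists (as the paper does). This is harmless---either enlarge $R$ to allow a trailing segment $\Delta$ of fresh bindings after the distinguished ones, or appeal once to invariance of $\mathcal S$ under context permutation modulo the set-of-alternatives equality---but it should be said explicitly. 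The paper's proof of Lemma~\ref{lem:cleavage-2} handles this by building the extension $\Delta$ into the displayed equations (\ref{eq:lem-cleavage2-fst}) and (\ref{eq:lem-cleavage2-snd}).
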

\begin{proof} The proof is omitted since Lemma \ref{lem:cleavage-2} below is essentially a generalization of this result. \end{proof}

We now capture when a context $\Gamma'$ is an inessential extension of context $\Gamma$:
\begin{definition}\label{def:leq}
\begin{enumerate}
\item $|\Gamma|=\{A:\exists x \textrm{ s.t.} (x:A)\in\Gamma\}$.
\item $\Gamma\leq\Gamma'$ if $\Gamma\subseteq\Gamma'$ and
$|\Gamma|=|\Gamma'|$.
\item $(\seq\Gamma p)\leq(\seq{\Gamma'}{p'})$ if $\Gamma\leq\Gamma'$ and
$p=p'$.
\end{enumerate}
\end{definition}

Let $\sigma$ range over sequents of the form $\seq{\Gamma}p$. Thus,
the last definition clause defines in general when
$\sigma\leq\sigma'$.

\begin{definition}
\begin{enumerate}
\item Let $\Gamma\leq\Gamma'$. For $N$ and $E$ in $\coolfs$, we define $[\Gamma'/\Gamma]N$
and $[\Gamma'/\Gamma]E$ by simultaneous corecursion
as follows:
$$
\begin{array}{lcll}
{[}\Gamma'/\Gamma](\lb x^A.N)&=&\lb x^A.[\Gamma',(x:A)/\Gamma,(x:A)]N\\
{[}\Gamma'/\Gamma]\s i{E_i}&=&\s i{[\Gamma'/\Gamma]E_i}\\
{[}\Gamma'/\Gamma]\big(z\fl i{N_i}\big)&=&z\fl i{[\Gamma'/\Gamma]N_i}&\textrm{if $z\notin dom(\Gamma)$}\\
{[}\Gamma'/\Gamma]\big(z\fl
i{N_i}\big)&=&\s{(w:\Gamma(z))\in\Gamma'}{w}\fl
i{[\Gamma'/\Gamma]N_i}&\textrm{if $z\in dom(\Gamma)$}
\end{array}
$$
\item Let $\sigma\leq\sigma'$. $[\sigma'/\sigma]N=[\Gamma'/\Gamma]N$ where $\sigma=(\seq{\Gamma}p)$ and $\sigma'=(\seq{\Gamma'}p)$.
Similarly for $[\sigma'/\sigma]E$.
\end{enumerate}
\end{definition}

\begin{lemma}[Co-contraction]\label{lem:cleavage-2}
If $\Gamma\leq\Gamma'$ then
$\sol{\Gamma'}C=[\Gamma'/\Gamma](\sol{\Gamma}C)$.
\end{lemma}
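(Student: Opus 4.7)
The plan is to prove the equality by coinduction on the shared corecursive structure of $\sol{-}{-}$ and $[-/-]$. I would exhibit a bisimulation $R$ on coterms together with a companion $R_E$ on elimination alternatives of $\coolfs$, defined by
$$R := \{(\sol{\Gamma'}{C},\, [\Gamma'/\Gamma]\sol{\Gamma}{C}) \mid \Gamma \leq \Gamma',\ C \text{ a formula}\}$$
(and analogously for $R_E$), and then show that $R$ and $R_E$ are backwards closed under the top-level generators of $\coolfs$. Since equality on these coinductively-defined terms is read as bisimilarity, this forces the two sides to coincide.

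The implication case $C = A \impl B$ is immediate: both sides produce $\lambda x^A.\_$, because $\sol{\Gamma'}{A \impl B} = \lambda x^A.\sol{\Gamma', x:A}{B}$ while the first clause of $[-/-]$ gives $[\Gamma'/\Gamma]\sol{\Gamma}{A \impl B} = \lambda x^A.[\Gamma', x:A / \Gamma, x:A]\sol{\Gamma, x:A}{B}$; and since $\Gamma \leq \Gamma'$ trivially implies $\Gamma, x:A \leq \Gamma', x:A$ by Definition~\ref{def:leq}, the bodies again form a pair of $R$.

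The atom case $C = p$ is the substantive one. The LHS expands to $\fs{y:\vec{B}\impl p\in\Gamma'}{y\fl j{\sol{\Gamma'}{B_j}}}$. On the RHS, every head variable $z$ occurring in $\sol{\Gamma}{p}$ lies in $dom(\Gamma)$, so the fourth clause of $[-/-]$ applies and produces the double sum
$$\fs{z:\vec{B}\impl p\in\Gamma}{\ \fs{w:\vec{B}\impl p\in\Gamma'}{w\fl j{[\Gamma'/\Gamma]\sol{\Gamma}{B_j}}}}\enspace.$$
Here the hypothesis $|\Gamma|=|\Gamma'|$ is used crucially: it guarantees that the outer $z$-sum is non-empty for every $w$ of type $\vec{B}\impl p$ in $\Gamma'$, and conversely that no head variable outside $\Gamma'$ is introduced. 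Hence, as \emph{sets} of alternatives the RHS equals $\{w\fl j{[\Gamma'/\Gamma]\sol{\Gamma}{B_j}} \mid w:\vec{B}\impl p \in \Gamma'\}$, matching the LHS head-for-head. The pointwise argument pairs $(\sol{\Gamma'}{B_j},\, [\Gamma'/\Gamma]\sol{\Gamma}{B_j})$ are again in $R$ by construction (taking the formula to be $B_j$), so $R_E$ is closed and the coinduction goes through.

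The main obstacle — and the reason the paper insists on quotienting sums by the underlying set of alternatives — is precisely the duplication produced by the outer $z$-sum on the RHS: distinct $z_1, z_2 \in \Gamma$ sharing a type with some $w \in \Gamma'$ each contribute the same summand $w\fl j{\ldots}$ independently. As raw syntax the two sides are not literally identical; they coincide only after identifying sums up to their set of alternatives, which is the intended notion of equality on $\coolfs$.
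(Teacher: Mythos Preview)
Your proposal is correct and follows essentially the same approach as the paper: both set up the bisimulation $R=\{(\sol{\Gamma'}C,[\Gamma'/\Gamma]\sol{\Gamma}C)\mid\Gamma\leq\Gamma'\}$, unfold both sides, use $|\Gamma|=|\Gamma'|$ to match head variables in the two summations, and appeal to the set-quotient on alternatives to absorb the duplication/reordering. The only cosmetic difference is that the paper strips all the $\lambda$'s at once via the vectorial form $\vec A\impl p$ (working with $\Delta=\Gamma,\vec z{:}\vec A$ and $\Delta'=\Gamma',\vec z{:}\vec A$), whereas you peel one $\lambda$ at a time; the head-variable matching argument is otherwise identical.
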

\begin{proof} Let
$R:=\{(\sol{\Gamma'}C,[\Gamma'/\Gamma](\sol{\Gamma}C))\mid\Gamma\leq\Gamma',C\textrm{ arbitrary}\}$.
\noindent We prove that $R$ is backward closed relative to the canonical equivalence $=$ generated by the coinductive definition of terms of $\coolfs$ (but see the comments following the proof), whence $R\subseteq=$.
\begin{equation}\label{eq:lem-cleavage2-fst}
\sol{\Gamma'}C=\lb z_1^{A_1}\cdots z_n^{A_n}.\s{(z:\vec B\impl
p)\in\Delta'}{z\fl j{\sol{\Delta'}{B_j}}}
\end{equation}
\noindent and
\begin{equation}\label{eq:lem-cleavage2-snd}
[\Gamma'/\Gamma](\sol{\Gamma}C)=\lb z_1^{A_1}\cdots
z_n^{A_n}.
\s{(y:\vec B\impl p)\in\Delta}{\s{\setbox0=\hbox{\scriptsize{\phantom{$\vec B$}}}\copy0\kern-\wd0(w:\Delta(y))\in\Delta'}{w\fl j{[\Delta'/\Delta]\sol{\Delta}{B_j}}}}
\end{equation}
\noindent where $\Delta:=\Gamma\cup\{z_1:A_1,\cdots,z_n:A_n\}$ and
$\Delta':=\Gamma'\cup\{z_1:A_1,\cdots,z_n:A_n\}$.

From $\Gamma\leq\Gamma'$ we get $\Delta\leq\Delta'$, hence
$$
(\sol{\Delta'}{B_j},[\Delta'/\Delta]\sol{\Delta}{B_j})\in R\enspace.
$$
\noindent To conclude the proof, it suffices to show that (i) each
head-variable $z$ that is a ``capability'' of the summation in
(\ref{eq:lem-cleavage2-fst}) is matched by a head-variable $w$ that
is a ``capability'' of the summation in
(\ref{eq:lem-cleavage2-snd}); and (ii) vice-versa.

(i) Let $z\in dom(\Delta')$. We have to exhibit $y\in dom(\Delta)$
such that $(z:\Delta(y))\in\Delta'$. First case: $z\in
dom(\Delta)$. By $\Delta\leq\Delta'$, $(z:\Delta(z))\in\Delta'$. So
we may take $y=z$. Second and last case:
$z\in\Gamma'\backslash\Gamma$. By $\Gamma\leq\Gamma'$, there is
$y\in\Gamma$ such that $(z:\Gamma(y))\in\Gamma'$. But then
$(z:\Delta(y))\in\Delta'$.

(ii) We have to show that, for all $y\in dom(\Delta)$, and all
$(w:\Delta(y))\in\Delta'$, $w\in dom(\Delta')$. But this is
immediate. \end{proof}

Notice that we cannot expect that the summands appear in the same
order in (\ref{eq:lem-cleavage2-fst}) and
(\ref{eq:lem-cleavage2-snd}). Therefore, we have to be more careful
with the notion of equality of B\"ohm forests. It is not just
bisimilarity, but we assume that the sums of elimination alternatives
are treated as if they were sets of alternatives, i.\,e., we further
assume that $+$ is symmetric and idempotent. It has been shown by Picard and the second author
\cite{PicardMatthesCMCS12} that bisimulation up to permutations in
unbounded lists of children can be managed in a coinductive type even
with the interactive proof assistant Coq. In analogy, this coarser
notion of equality (even abstracting away from the number of
occurrences of an alternative) should not present a major obstacle for
a fully formal presentation.

In the rest of the paper -- in particular in Theorem~\ref{thm:FullProp}
-- we assume that sums of alternatives are treated as if they were
sets.

\begin{example}[Example~\ref{ex:dn-Pierce} continued]
Thanks to the preceding lemma, $N_9$ is obtained by
co-contraction from $N_5$:
$$N_9=[x:\cdot,y:(p\impl q)\impl p,z:p,y_1:(p\impl q)\impl p,z_1:p\,/\,x:\cdot,y:(p\impl q)\impl p,z:p]N_5\enspace,$$
where the type of $x$ has been omitted. Hence, $N_6$, $N_7$, $N_8$ and $N_9$ can be eliminated, and $N_5$ can be expressed as the (meta-level) fixed point:
$$N_5=\fix\, N.x\tuple{\lambda y_1^{(p\impl q)\impl p}.y\tuple{\lambda z_1^p.[x,y,z,y_1,z_1/x,y,z]N}+z+y_1\tuple{\lambda z_1^p.[x,y,z,y_1,z_1/x,y,z]N}}\enspace,$$
now missing out all types in the context substitution.
Finally, we obtain the closed B\"ohm forest
$$\sol{}{A}=\lambda x^{(((p\impl q)\impl p)\impl p)\impl q}.x\tuple{\lambda y^{(p\impl q)\impl p}.y\tuple{\lambda z^{p}. N_5}}$$
\end{example}

The question is now how to give a finitary meaning to terms like
$N_5$ in the example above, which are defined by fixed points over
variables subject to context substitution. We might expect to use
the equation defining $N_5$ to obtain a finitary representation in
$\olfsfix$, provided context substitution is defined on this system.
But how to do that? Applying say $[x,y,z,y_1,z_1/x,y,z]$ to a plain
fixed-point variable cannot make much sense.

The desired finitary representation in the full implicational
case is obtained by adjusting the terms of $\olfsfix$ used in the
Horn case as follows:
$$
\begin{array}{lcrcl}
\textrm{(terms)} &  & N & ::= & (\cdots)|\, \gfp\,{X^{\sigma}}.\ns{E_1}{E_n}\,|\,X^{\sigma}\\
\end{array}
$$
\noindent Hence fixpoint variables are ``typed'' with
\emph{sequents} $\sigma$.

Different free occurrences of the same $X$ may be "typed" with
different $\sigma$'s, as long as a lower bound of these $\sigma$'s
can be found w.r.t. $\leq$ (Definition~\ref{def:leq}).

Relatively to Definition \ref{def:interpretation}, an environment
$\xi$ now assigns (co)terms $N$ of $\coolfs$ to ``typed'' fixpoint
variables $X^{\sigma}$, provided $X$ does not occur with two
different ``types'' in the domain of $\xi$, for all $X$; we also
change the following clauses:
$$
\begin{array}{rcll}
\interp{X^{\sigma'}}\xi& = & [\sigma'/\sigma]\xi(X^{\sigma})&\textrm{if $\sigma\leq\sigma'$}\\
\interp{\gfp\,{X^{\sigma}}.\s{i}{E_i}}\xi&= & \fix\, N.\s i{\interp {E_i}{\xi\cup[X^{\sigma}\mapsto N]}}\\
\end{array}
$$
We will have to assign some default value to $X^{\sigma'}$ in case there is no such $\sigma$, but this will not play a role in the main result below.

Map $\N p{\vect{X:q}}$ used in the proof of Theorem \ref{thm:Horn}
is replaced by the following:

\begin{definition}\label{def:map-N} Let $\Xi:=\vect{X:\seq{\Theta}q}$ be a vector
of $m\geq 0$ declarations $(X_i:\seq{\Theta_i}{q_i})$ where no
fixpoint variable and no sequent occurs twice. $\N{\seq\Gamma{\vec
A\impl p}}{\Xi}$ is defined as follows:

If, for some $1\leq i\leq m$, $p=q_i$ and $\Theta_i\subseteq\Gamma$
and $|\Theta_i|=|\Delta|$, then
$$
\N{\seq\Gamma{\vec A\impl p}}{\Xi}=\lb z_1^{A_1}\cdots
z_n^{A_n}.X_i^{\sigma}
$$
otherwise,
$$
\N{\seq\Gamma{\vec A\impl p}}{\Xi}=\lb z_1^{A_1}\cdots
z_n^{A_n}.\gfp\,{Y^{\sigma}}.{\s{(y:\vec {B}\impl p)\in\Delta}{y \fl
j{\N{\seq{\Delta}{B_j}}{\Xi,Y:\sigma}}}}
$$
where, in both cases, $\Delta:=\Gamma\cup\{z_1:A_1,\cdots,z_n:A_n\}$
and $\sigma:=\seq{\Delta}p$.
\end{definition}

The definition of $\N p{\vect{X:q}}$ in the proof of Theorem
\ref{thm:Horn} was by recursion on a certain number of atoms. The
following lemma spells out the measure that is recursively decreasing in the
definition of $\N{\seq\Gamma{C}}{\Xi}$.

\begin{lemma}\label{lem:termination}
For all $\seq\Gamma C$, $\N{\seq\Gamma C}{\cdot}$ is well-defined,
where $\cdot$ denotes the empty vector.
\end{lemma}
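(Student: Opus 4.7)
The plan is to exhibit a well-founded measure on the arguments of the function $\N{-}{-}$ that strictly decreases on every recursive call of Definition~\ref{def:map-N}, so that its otherwise unrestricted call pattern bottoms out after finitely many steps. I would start by fixing the original sequent $\sigma_0=\seq{\Gamma_0}{C_0}$ for which $\N{\sigma_0}{\cdot}$ is to be computed and by observing that, along any chain of nested recursive calls, every formula that can appear as the type of a context declaration is a subformula of a formula already present in $\Gamma_0$ or $C_0$, and every atom that can appear as the goal of a sequent is a propositional variable of such a subformula. Consequently, the set $\mathcal{F}$ of possible context-types and the set $\mathcal{A}$ of possible goal-atoms are both finite, and so in particular is $\mathcal{A}\times 2^{\mathcal{F}}$.

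Next I would track how $\Xi$ evolves. In the recursive clause, $\N{\seq\Gamma{\vec A\impl p}}{\Xi}$ spawns calls $\N{\seq{\Delta}{B_j}}{\Xi,Y:\sigma}$ with $\Delta=\Gamma,z_1:A_1,\ldots,z_n:A_n$ and $\sigma=\seq\Delta p$. Hence, along a recursion path $1,2,\ldots,\ell$, the parameter $\Xi$ grows by exactly one declaration per step, the ambient contexts form a weakly increasing chain $\Delta^1\subseteq\Delta^2\subseteq\cdots$ (with $\Gamma^{k+1}=\Delta^k$), and at step $\ell$ the parameter $\Xi$ equals $\{Y_k:\seq{\Delta^k}{p^k}:k<\ell\}$. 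A key consequence is that the side-condition $\Theta_i\subseteq\Gamma$ of the termination clause is \emph{automatically} satisfied for every $Y_i:\seq{\Theta_i}{q_i}$ in $\Xi$, since by monotonicity $\Theta_i=\Delta^i\subseteq\Delta^{\ell-1}=\Gamma^\ell$.

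Armed with this, termination becomes a short pigeonhole argument. The simplified termination condition at step $\ell$ reads ``there exists $k<\ell$ with $p^\ell=p^k$ and $|\Delta^\ell|=|\Delta^k|$'', i.\,e., the pair $(p^\ell,|\Delta^\ell|)\in\mathcal{A}\times 2^{\mathcal{F}}$ already occurs among the pairs recorded in $\Xi$. If the recursion never halted, each new pair would have to differ from all preceding ones, yielding an infinite injection into the finite set $\mathcal{A}\times 2^{\mathcal{F}}$, which is absurd. To turn this into a formal recursion on a natural number, I would take
$$\mu(\seq\Gamma C,\Xi)\;=\;|\mathcal{A}|\cdot 2^{|\mathcal{F}|}\;-\;|\{(p^k,|\Delta^k|):(Y_k:\sigma^k)\in\Xi\}|\enspace,$$
and note that $\mu$ strictly decreases on every non-terminating recursive call.

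The main obstacle is not the mathematical core -- which is the pigeonhole observation -- but the bookkeeping needed to make ``the ambient context grows monotonically'' rigorous under the paper's convention that no variable is declared twice (so the bound variables $z_i$ must be chosen fresh with respect to $\Gamma$), together with the verification that $\mathcal{F}$ and $\mathcal{A}$ are indeed closed under the transitions taken by the recursion. Both are essentially mechanical once one fixes the finite pools $\mathcal{F}$ and $\mathcal{A}$ determined by $\sigma_0$ at the outset.
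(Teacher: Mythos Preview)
Your proposal is correct and follows essentially the same approach as the paper: both define the measure as the number of ``stripped sequents'' $(p,|\Delta|)$ over a fixed finite subformula-closed universe minus the number already recorded in $\Xi$, both exploit context monotonicity along recursion paths (the paper's invariant~(ii)) to render the side-condition $\Theta_i\subseteq\Gamma$ automatic, and both conclude by the same pigeonhole argument. The paper packages the bookkeeping you flag at the end as an explicit ``$\mathcal A$-invariant'' (your closure of $\mathcal F$ and $\mathcal A$ is its item~(i), your chain $\Delta^1\subseteq\Delta^2\subseteq\cdots$ is its item~(ii)) and verifies its preservation under recursive calls, but the mathematical content is identical.
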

\begin{proof} Let us call \emph{recursive call} a ``reduction''
\begin{equation}\label{eq:recursive-call}
\N{\seq\Gamma{\vec A\impl p}}{\vect{X:\seq{\Theta}q}}\leadsto
\N{\seq{\Delta}{B_j}}{\vect {X:\seq{\Theta}q},Y:\sigma}
\end{equation}
\noindent where the if-guard in Def.~\ref{def:map-N} fails; $\Delta$ and
$\sigma$ are defined as in the same definition; and, for some $y$,
$(y:\vec {B}\impl p)\in\Delta$. We want to prove that every sequence
of recursive calls from $\N{\seq\Gamma C}{\cdot}$ is finite.

First we introduce some definitions. ${\cal
A}^{sub}:=\{B\mid\textrm{there is $A\in{\cal A}$ such that $B$
is subformula of $A$}\}$, for $\cal A$ a finite set of formulas. We
say $\cal A$ is \emph{subformula-closed} if ${\cal A}^{sub}={\cal
A}$. A \emph{stripped sequent} is a pair $({\cal B},p)$, where $\cal
B$ is a finite set of formulas. If $\sigma=\seq{\Gamma}p$, then
$|\sigma|$ denotes the stripped sequent $(|\Gamma|,p)$. We say
$({\cal B},p)$ \emph{is over} $\cal A$ if ${\cal B}\subseteq{\cal
A}$ and $p\in{\cal A}$. There are $size({\cal A}):=a\cdot2^k$ stripped
sequents over $\cal A$, if $a$ (resp. $k$) is the number of atoms
(resp. formulas) in $\cal A$.

Let $\cal A$ be subformula-closed. We say $\seq\Gamma{C}$ and
$\Xi:=\vect{X:\seq{\Theta}q}$ satisfy the \emph{$\cal A$-invariant}
if:
\begin{itemize}
\item[(i)] $|\Gamma|\cup\{C\}\subseteq{\cal A}$;
\item[(ii)]
$\Theta_1\subseteq\Theta_2\subseteq\cdots\subseteq\Theta_m=\Gamma$ (if $m=0$ then this is meant to be vacuously true);
\item[(iii)] For $1\leq j\leq m$, $q_j\in|\Gamma|^{sub}$,
\end{itemize}
where $m\geq 0$ is the length of vector $\Xi$ (if $m=0$, also item
(iii) is vacuously true). In particular, $|\sigma|$ is over $\cal
A$, for all $\sigma\in\Xi$. We prove that, if $\seq\Gamma C$ and
$\Xi$ satisfy the $\cal A$-invariant for some $\cal A$, then every
sequence of recursive calls from $\N{\seq\Gamma C}{\Xi}$ is finite.
The proof is by induction on $size({\cal A})-size(\Xi)$, where
$size(\Xi)$ is the number of elements of $|\Xi|$ and
$|\Xi|:=\{|\sigma|:\sigma\in\Xi\}$.

Let $C=\vec A\impl p$. We analyze an arbitrary recursive call
(\ref{eq:recursive-call}) and prove that every sequence of recursive
calls from $\N{\seq{\Delta}{B_j}}{\Xi,Y:\sigma}$ is finite. This is
achieved by proving:
\begin{itemize}
\item[(I)] $\seq{\Delta}{B_j}$ and $\Xi,Y:\sigma$ satisfy the $\cal
A$-invariant;
\item[(II)] $size(\Xi,Y:\sigma)>size(\Xi)$.
\end{itemize}

Proof of (I). By assumption, (i), (ii), and (iii) above hold. We
want to prove:
\begin{itemize}
\item[(i')] $|\Delta|\cup\{B_j\}\subseteq{\cal A}$;
\item[(ii')]
$\Theta_1\subseteq\Theta_2\subseteq\cdots\subseteq\Theta_m\subseteq\Delta=\Delta$;
\item[(iii')] For $1\leq j\leq m+1$, $q_j\in|\Delta|^{sub}$.
\end{itemize}

Proof of (i').
$|\Delta|=|\Gamma|\cup\{A_1,\cdots,A_n\}\subseteq{\cal A}$ by (i)
and $\cal A$ subformula-closed. $B_j$ is a subformula of $\vec
B\impl p$ and $\vec B\impl p\in|\Delta|$ because $(y:\vec B\impl
p)\in\Delta$, for some $y$.

Proof of (ii'). Immediate by (ii) and $\Gamma\subseteq\Delta$.

Proof of (iii'). For $1\leq j\leq m$,
$q_j\in|\Gamma|^{sub}\subseteq|\Delta|^{sub}$, by (iii) and
$\Gamma\subseteq\Delta$. On the other hand,
$q_{j+1}=p\in|\Delta|^{sub}$ because $(y:\vec B\impl p)\in\Delta$,
for some $y$.

Proof of (II). Given that the if-guard of Def.~\ref{def:map-N}
fails, and that $\Theta_i\subseteq\Gamma$ due to (ii), we conclude:
for all $1\leq i\leq m$, $p\neq q_i$ or $|\Theta_i|\neq|\Delta|$.
But this means that $|\seq{\Delta}p|\notin|\Xi|$, hence
$size(\Xi,Y:\sigma)>size(\Xi)$.

Now, by I.H., every sequence of recursive calls from
$\N{\seq{\Delta}{B_j}}{\Xi,Y:\sigma}$ is finite. This concludes the
proof by induction.

Finally let ${\cal A}=(|\Gamma|\cup\{C\})^{sub}$ and observe that
$\seq\Gamma C$ and $\Xi=\cdot$ satisfy the $\cal A$-invariant.
\end{proof}

\begin{theorem}[Equivalence]\label{thm:FullProp}
For any $\Gamma$ and $C$, there exists $N_{\seq\Gamma C}\in\olfsfix$
with no free occurrences of fixpoint variables such that
$\interpwe{N_{\seq\Gamma C}}=\sol\Gamma C$.
\end{theorem}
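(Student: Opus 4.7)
The plan is to take $N_{\seq\Gamma C} := \N{\seq\Gamma C}{\cdot}$, which is well-defined by Lemma~\ref{lem:termination}. The theorem then reduces to proving, by the same well-founded induction used in that lemma (on $size({\cal A}) - size(\Xi)$ under the $\cal A$-invariant), the strengthened claim: for every $\seq\Gamma C$ and $\Xi = \vect{X:\seq{\Theta}q}$ satisfying the $\cal A$-invariant relative to some subformula-closed $\cal A$, and every environment $\xi$ with $\xi(X_i^{\sigma_i}) = \sol{\Theta_i}{q_i}$ for each $(X_i:\sigma_i)\in\Xi$,
$$\interp{\N{\seq\Gamma C}{\Xi}}{\xi} = \sol\Gamma C\enspace.$$
The theorem itself is the special case $\Xi = \cdot$ (with empty $\xi$), using ${\cal A} = (|\Gamma|\cup\{C\})^{sub}$, exactly as at the end of the proof of Lemma~\ref{lem:termination}.

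Writing $C = \vec A \impl p$ and $\Delta := \Gamma,\vec z:\vec A$, pulling the outer $\lambda\vec z^{\vec A}$ across both sides reduces the task, via Definition~\ref{def:sol}, to showing that the body of $\N{\seq\Gamma C}{\Xi}$ in Definition~\ref{def:map-N} is interpreted as $\sol\Delta p$. In the first case of that definition, we have some $i$ with $p = q_i$, $\Theta_i\subseteq\Gamma\subseteq\Delta$ and $|\Theta_i|=|\Delta|$, so $\Theta_i\leq\Delta$, i.\,e., $\sigma_i\leq\sigma$ where $\sigma = \seq\Delta p$. Applying the interpretation clause for typed fixpoint variables and then co-contraction (Lemma~\ref{lem:cleavage-2}) gives
$$\interp{X_i^\sigma}{\xi} = [\sigma/\sigma_i]\xi(X_i^{\sigma_i}) = [\Delta/\Theta_i]\sol{\Theta_i}{q_i} = \sol\Delta p\enspace.$$

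In the second case, the body is $\gfp\,{Y^\sigma}.\s{(y:\vec B\impl p)\in\Delta}{y\fl j{\N{\seq\Delta{B_j}}{\Xi,Y:\sigma}}}$, interpreted as the unique $N^\infty$ solving
$$N^\infty = \s{(y:\vec B\impl p)\in\Delta}{y\fl j{\interp{\N{\seq\Delta{B_j}}{\Xi,Y:\sigma}}{\xi\cup[Y^\sigma\mapsto N^\infty]}}}\enspace.$$
The proof of Lemma~\ref{lem:termination} has already verified that each recursive call preserves the $\cal A$-invariant and strictly decreases the measure, so the induction hypothesis is available. Applied with the candidate environment $\xi\cup[Y^\sigma\mapsto\sol\Delta p]$ (which satisfies the enlarged invariant since $\sigma = \seq\Delta p$), it yields $\interp{\N{\seq\Delta{B_j}}{\Xi,Y:\sigma}}{\xi\cup[Y^\sigma\mapsto\sol\Delta p]} = \sol\Delta{B_j}$ for every $j$ and every applicable $y$. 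But then the defining equation of $\sol\Delta p$ from Definition~\ref{def:sol} shows that $\sol\Delta p$ itself satisfies the fixpoint equation characterizing $N^\infty$; since this equation is guarded (each summand begins with a head variable), its coterm solution is unique, whence $N^\infty = \sol\Delta p$.

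The main obstacle is the base case: one must see why a typed-variable reference $X_i^\sigma$ whose binding sequent $\sigma_i$ is strictly smaller than $\sigma$ correctly names the solution space at the current node. This is precisely the \emph{raison d'\^etre} of co-contraction, and the if-guard in Definition~\ref{def:map-N} is engineered to fire exactly when $\Theta_i\leq\Delta$, making Lemma~\ref{lem:cleavage-2} applicable. Once this bridge is in place, the remaining fixed-point step and the bookkeeping of the extended environment follow the same pattern as in the proof of Theorem~\ref{thm:Horn}, with the added but routine check that the invariant on $\xi$ is preserved under the extension $Y^\sigma\mapsto\sol\Delta p$.
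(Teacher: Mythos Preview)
Your proposal is correct and follows essentially the same route as the paper's own proof: the same strengthened claim about $\interp{\N{\seq\Gamma C}{\Xi}}{\xi}$, the same well-founded induction on $size({\cal A})-size(\Xi)$, the same appeal to Lemma~\ref{lem:cleavage-2} in the if-guard case, and the same guarded-fixpoint argument (mirroring Theorem~\ref{thm:Horn}) in the recursive case. The only cosmetic difference is that you make the $\cal A$-invariant an explicit hypothesis of the strengthened claim, whereas the paper leaves it implicit by pointing back to Lemma~\ref{lem:termination}; your formulation is arguably cleaner bookkeeping but not a different idea.
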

\begin{proof} We prove: if, for all $i$, $\xi(X_i^{\seq{\Theta_i}{q_i}})=\sol{\Theta_i}{q_i}$, then
\begin{equation}\label{eq:full-thm}
\interp{\N{\seq\Gamma{\vec A\impl p}}{\Xi}}{\xi}=\sol\Gamma{\vec
A\impl p}\enspace,
\end{equation}
\noindent where $\Xi:=\vect{X:\seq{\Theta}q}$. In this proof we
re-use the concepts introduced in the proof of Lemma
\ref{lem:termination}. Let ${\cal{A}}:=(|\Gamma|\cup\{\vec A\impl
p\})^{sub}$. The proof is by induction on $size({\cal
A})-size(\Xi)$.

Case $p=q_i$ and $\Theta_i'\subseteq\Gamma$ and
$|\Theta_i'|=|\Delta|$, for some $1\leq i\leq m$, with $m$ the
length of $\Xi$.
Then,
$$
\begin{array}{rcll}
LHS&=&\lb z_1^{A_1}\cdots
z_n^{A_n}.\interp{X_i^{\seq{\Delta}{q_i}}}\xi&\textrm{(by definition)}\\
&=&\lb z_1^{A_1}\cdots
z_n^{A_n}.[\seq{\Delta}{q_i}/\seq{\Theta_i}{q_i}]\xi(X_i^{\seq{\Theta_i}{q_i}})&\textrm{(by definition and (*) below)}\\
&=&\lb z_1^{A_1}\cdots
z_n^{A_n}.[\seq{\Delta}{q_i}/\seq{\Theta_i}{q_i}]\sol{\Theta_i}{q_i}&\textrm{(by assumption)}\\
&=&\lb z_1^{A_1}\cdots
z_n^{A_n}.\sol{\Delta}{q_i}&\textrm{(by Lemma \ref{lem:cleavage-2} and (*))}\\
&=&RHS&\textrm{(by definition)}
\end{array}
$$
\noindent where $\Delta:=\Gamma\cup\{z_1:A_1,\cdots,z_n:A_n\}$,
which implies $(\seq{\Theta_i}{q_i})\leq(\seq{\Delta}{q_i})$. The latter
fact is the justification (*) used above.

The inductive case is an easy extension of the inductive case in
Theorem \ref{thm:Horn}. Suppose the case above holds for no $1\leq
i\leq m$. Then $LHS=\lb z_1^{A_1}\cdots
z_n^{A_n}.N^\infty$, where $N^\infty$ is the unique solution of the following equation
\begin{eqnarray}
\label{eq:N-infty-full-case} N^\infty&=&\s {(y:\vect {B}\impl
p)\in\Delta}{y\fl j {\interp
{\N{\seq{\Delta}{B_j}}{\Xi,Y:\sigma}}{\xi\cup[Y^{\sigma}\mapsto
N^\infty]}}}
\end{eqnarray}
\noindent and, again,
$\Delta:=\Gamma\cup\{z_1:A_1,\cdots,z_n:A_n\}$. Now observe that, by
I.H., the following equations (\ref{eq:sol-full-case}) and
(\ref{eq:sol-two-full-case}) are equivalent.
\begin{eqnarray}
\label{eq:sol-full-case} \sol\Delta{p}&=& \s {(y:\vect {B}\impl p)\in\Delta}{y\fl j {\interp {\N{\seq{\Delta}{B_j}}{\Xi,Y:\sigma}}{\xi\cup[Y^{\sigma}\mapsto\sol\Delta p]}}}\\
\label{eq:sol-two-full-case} \sol\Delta{p}&=& \s {(y:\vect {B}\impl
p)\in\Delta}{y\fl j {\sol\Delta {B_j}}}
\end{eqnarray}
By definition of $\sol\Delta{p}$, (\ref{eq:sol-two-full-case})
holds; hence - because of (\ref{eq:sol-full-case}) -
$\sol\Delta p$ is the solution $N^\infty$ of
(\ref{eq:N-infty-full-case}). Therefore $LHS=\lb z_1^{A_1}\cdots
z_n^{A_n}.\sol\Delta p$, and the latter is $RHS$ by definition of
$\sol\Gamma{\vec A\impl p}$.

Finally, the theorem follows as the particular case of
(\ref{eq:full-thm}) where $C=\vec A\impl p$ and the vector of
fixpoint variable declarations is empty. \end{proof}


\section{Conclusion}\label{sec:conclusion}

We proposed a coinductive approach to proof search, which we illustrated in the case of the cut-free system $LJT$ for intuitionistic implication (and its proof-annotated version $\ol$). As the fundamental tool, we introduced the coinductive calculus $\coolfs$, which besides the coinductive reading of $\ol$, introduces a construction for finite alternatives. The (co)terms of this calculus (also called B\"{o}hm forests) are used to represent the solution space of proof search for $LJT$-sequents, and this is achieved by means of a corecursive function, whose definition arises naturally by taking a 
reductive view of the inference rules and by using the finite alternatives construction to account for multiple alternatives in deriving a given sequent.

We offered also a finitary representation of proof search in $LJT$, based on  the inductive calculus $\olfsfix$ with finite alternatives and a fixed point construction, and showed equivalence of the representations. The equivalence results turned out to be an easy task in the case of the Horn fragment, but demanded for
co-contraction of contexts (contraction bottom-up) in the case of full implication.

With Pym and Ritter \cite{pym2004reductive} we share the general goal of setting a framework for studying proof search, and the reductive view of inference rules, by which each inference rule is seen as a reduction operator (from a putative conclusion to a collection of sufficient premises), and reduction (the process of repeatedly applying reduction operators) may fail to yield a (finite) proof. However, the methods are very different. Instead of using a coinductive approach, Pym and Ritter 
introduce the $\lambda\mu\nu\epsilon$-calculus for classical sequent calculus as the means for representing derivations and for studying intuitionistic proof search (a task that is carried out both in the context of the sequent calculus LJ and of intuitionistic resolution).

In the context of logic programming with classical first-order Horn clauses, and building on their previous work \cite{DBLP:conf/calco/KomendantskayaP11,DBLP:conf/amast/KomendantskayaMP10}, Komendantskaya and Power \cite{DBLP:conf/csl/KomendantskayaP11} establish
a coalgebraic semantics uniform for both finite and infinite SLD-resolutions.
In particular, a notion of coinductive (and-or) derivation tree of an atomic goal w.\,r.\,t.~a (fixed) program is introduced.
Soundness and completeness results of SLD-resolution relative to coinductive derivation trees and to the coalgebraic semantics are also proved.
Logic programming 
is viewed as search for  uniform proofs in sequent calculus by Miller \emph{et al.} \cite{DBLP:journals/apal/MillerNPS91}. 
For intuitionistic implication, uniform proofs correspond to the class of ($\eta$-)expanded normal natural deductions  (see Dyckoff and Pinto \cite{DPPSTTL94}), hence to the typed $\ol$-terms we considered in this paper (recall the restriction to atoms in rule $Der$  of Fig. \ref{fig:lambda-bar} for typing application). Under this view, our work relates to Komendantskaya and Power \cite{DBLP:conf/csl/KomendantskayaP11}, as both works adopt a coinductive approach
in the context of proof search. However, the two approaches are different in methods and in goals.  As the basis of
the coinductive representation of the search space, instead of and-or infinite trees, we follow the Curry-Howard view of proofs as terms,
and propose the use of a typed calculus of coinductive lambda-terms.
Whereas Komendantskaya and Power \cite{DBLP:conf/csl/KomendantskayaP11} are already capable of addressing first-order quantification, we only consider intuitionistic implication.
Still, as we consider full intuitionistic implication, our study is not contained in classical Horn logic.
The fact that we need to treat negative occurrences of implication,
raises on the logic programming side the need for dealing with programs to which clauses can be added dynamically.

As a priority for future work, we plan to develop notions of normalisation for the calculi $\coolfs$ and $\olfsfix$ in connection with aspects of proof search like pruning search spaces and reading off (finite) proofs.

In order to test for the generality of our approach, we intend to extend it to treat the first-order case. Staying within intuitionistic implication,
but changing the proofs searched for, another case study we intend to investigate is Dyckhoff's contraction-free system \cite{DBLP:journals/jsyml/Dyckhoff92}.


\paragraph{Acknowledgments}

We thank our anonymous referees for their helpful comments. Jos\'{e}
Esp\'{\i}rito Santo and Lu\'{\i}s Pinto have been financed by FEDER
funds through ``Programa Operacional Factores de Competitividade --
COMPETE'' and by Portuguese funds through FCT -- ``Funda\c{c}\~{a}o
para a Ci\^{e}ncia e a Tecnologia'', within the project
PEst-C/MAT/UI0013/2011.  Ralph Matthes thanks the Centro de
Matem\'{a}tica of Universidade do Minho for funding research visits to
Jos\'{e} Esp\'{\i}rito Santo and Lu\'{\i}s Pinto to start this
research (2011/2012). Subsequently, he has been funded by the \emph{Climt} project
(ANR-11-BS02-016 of the French Agence Nationale de la Recherche).



\end{document}